\newtheorem{theorem}{Theorem} 
\newtheorem{prop}[theorem]{Proposition}
\newtheorem{proposition}[theorem]{Proposition}
\newtheorem{lemma}[theorem]{Lemma}
\theoremstyle{definition}
\newtheorem{definition}[theorem]{Definition}
\theoremstyle{remark}
\theoremstyle{definition}
\newcommand{\CC}{\mathcal C}
\newcommand{\DD}{\mathcal D}
\newcommand{\EE}{\mathcal E}
\newcommand{\OO}{\mathcal O}
\newcommand{\OOl}{\OO_{\preceq}}
\newcommand{\NN}{\mathbb N}
\newcommand{\vCP}{\vdash_{{}_{\! \! \mathrm {CP}}}}
\newcommand{\PV}{\mathrm{PV}}
\newcommand{\ER}{\mathrm{ER}}
\newcommand{\DRAT}{\mathrm{DRAT}}
\newcommand{\CP}{\mathrm{CP}}
\newcommand{\ERpls}{\mathrm{ER\textrm{-}PLS}}
\newcommand{\ang}[1]{\langle #1 \rangle}
\newcommand{\llbracket}{[ \hspace{-1.4pt} [}
\newcommand{\rrbracket}{ ] \hspace{-1.4pt}  ]}
\newcommand{\norm}[1]{\llbracket #1 \rrbracket}
\newcommand{\lelex}{\le_{\mathrm{lex}}}
\newcommand{\llex}{<_{\mathrm{lex}}}
\newcommand{\Lit}{\mathrm{Lit}}
\newcommand{\Sat}{\mathrm{Sat}}
\newcommand{\Refu}{\mathrm{Ref}}
\newcommand{\out}{\mathrm{out}}
\def\restrict{ \upharpoonright  }
\let\epsilon\varepsilon
\let\phi\varphi
\title{The strength of the dominance rule} 
\author{
Leszek Aleksander Kołodziejczyk\thanks{Institute of Mathematics, University of Warsaw, Poland, \texttt{lak@mimuw.edu.pl}}  
~and Neil Thapen\thanks{Institute of Mathematics, Czech Academy of Sciences, Czechia, \texttt{thapen@math.cas.cz}. Supported by the Institute of Mathematics, Czech Academy of Sciences (RVO 67985840) and GA\v{C}R grant 23-04825S.}
}
\begin{document}

\maketitle

\begin{abstract} 
It has become standard that,
when a SAT solver decides that a CNF $\Gamma$ is unsatisfiable, 
it produces a certificate of unsatisfiability in the form of a refutation of $\Gamma$ in some  proof system.
The system typically used is DRAT,
which is equivalent to extended resolution (ER)
-- for example, until this year DRAT refutations were required 
in the annual SAT competition.
Recently [Bogaerts et al.~2023] introduced a new proof system, associated with the tool VeriPB,
which is at least as strong as DRAT 
and is further able to handle certain symmetry-breaking techniques. 
We show that this system simulates the proof system $G_1$, 
which allows limited reasoning with QBFs
and forms the first level above ER in a natural hierarchy
of proof systems. This hierarchy is not known to be strict,
but nevertheless this  is evidence that the system of [Bogaerts et al. 2023]  is plausibly
strictly stronger than ER and DRAT.
In the other direction, we show that symmetry-breaking for a single symmetry 
can be handled inside ER.
\end{abstract}

\section{Introduction}

We write $\Lit$ for the set of propositional literals and $0, 1$ for the propositional constants.
A \emph{clause} is a disjunction which may contain $0$, $1$ or a literal
together with its negation;
in the last two cases we call it \emph{tautologous}.
 A \emph{conjunctive normal form} formula, or CNF,
is a set of clauses, understood as a conjunction.
We  write $\bot$ for the empty clause and $\top$ for the empty CNF.
For a clause $C = x_1 \vee \dots \vee x_k$ we write $\neg C$ for 
the CNF $\neg x_1 \wedge \dots \wedge \neg x_k$.

A \emph{substitution} is a map 
$\omega : \Lit \cup \{0,1\} \rightarrow \Lit \cup \{0,1\}$
which respects negations and is the identity on~$\{0,1\}$.
We view a {partial assignment} as a  kind of substitution.
For literals $p_1, \dots, p_k$ we write ${\vec p}_{\restrict \omega}$ 
for the tuple $\omega(p_1), \dots, \omega(p_k)$.
For a clause $C$, we write $C_{\restrict \omega}$ for the clause $\{\omega(p):p \in C \}$.
For a CNF $\Gamma$, we write $\Gamma_{\restrict \omega}$
for $\{ C_{\restrict \omega} :  C \in \Gamma \}$. 
We write $\omega \vDash \Gamma$ to mean that $\Gamma_{\restrict \omega}$ is tautologous,
that is, every clause is tautologous.
If $\omega$ is a partial assignment
this is the same as the usual meaning of $\vDash$.
The composition of substitutions $\tau, \omega$ is defined by
$\tau \circ \omega (p) = \tau(\omega(p))$.
Note that $\Gamma_{\restrict \tau \circ \omega} = (\Gamma_{\restrict \omega})_{\restrict \tau}$
and thus $\tau \vDash \Gamma_{\restrict \omega}$ if and only if $\tau \circ \omega \vDash \Gamma$.

\begin{definition}
A \emph{symmetry} of a CNF $\Gamma$ is a substitution $\omega$
such that $\Gamma_{\restrict \omega} = \Gamma$.
\end{definition}

Hard combinatorial formulas often have many symmetries. 
A successful heuristic to make such formulas easier for SAT solvers is
\emph{symmetry-breaking}, usually in the form of introducing
a  \emph{lex-leader} constraint~\cite{crawford1996symmetry}. 
We illustrate this in the following proposition.
Suppose $\Gamma$ is a CNF in variables~$z_1, \dots, z_n$
and suppose we have a formula $[x_1, \dots, x_n \lelex y_1, \dots, y_n]$ expressing the lexicographic
order on assignments to $\vec x$ and $\vec y$. For now we suppress technical 
issues of how exactly we express this, and tacitly treat $[\vec x \lelex \vec y]$ as though it is a CNF.

\begin{proposition} \label{pro:intro_symmetry}
If $\omega$ is a symmetry of $\Gamma$, then
$\Gamma$ and $\Gamma \cup [\vec z \lelex \vec z_{\restrict \omega}]$
are equisatisfiable.
\end{proposition}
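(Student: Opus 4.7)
The plan is to prove both implications of equisatisfiability separately. The easy direction is that any assignment satisfying $\Gamma \cup [\vec z \lelex \vec z_{\restrict \omega}]$ clearly satisfies $\Gamma$, so no work is needed there. The content is in showing that if $\Gamma$ is satisfiable, then so is $\Gamma \cup [\vec z \lelex \vec z_{\restrict \omega}]$.

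For that direction, I would use a minimality argument. Suppose $\Gamma$ is satisfiable and let $\alpha$ be the lexicographically smallest total assignment to $\vec z$ with $\alpha \vDash \Gamma$ (such an $\alpha$ exists because $\Gamma$ has finitely many variables). The goal is to check that $\alpha$ already satisfies the additional constraint, i.e.\ that $\alpha(\vec z) \lelex \alpha(\vec z_{\restrict \omega})$. Note that $\alpha(\vec z_{\restrict \omega})$ is exactly the tuple $(\alpha \circ \omega)(\vec z)$, so what I need is $\alpha(\vec z) \lelex (\alpha \circ \omega)(\vec z)$.

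To get this, I would apply the composition identity already recorded in the excerpt: $\tau \vDash \Gamma_{\restrict \omega}$ iff $\tau \circ \omega \vDash \Gamma$. Using $\Gamma_{\restrict \omega} = \Gamma$ by the symmetry assumption, and instantiating $\tau := \alpha$, this gives $\alpha \circ \omega \vDash \Gamma$. Hence $\alpha \circ \omega$ is another satisfying assignment, so by the lex-minimality of $\alpha$ we have $\alpha(\vec z) \lelex (\alpha \circ \omega)(\vec z)$, which is what was required.

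There isn't really a hard step here; the argument is a clean three-line chain combining symmetry, the $\tau \circ \omega$ identity, and minimality. The only thing to be a little careful about is reading $\alpha(\vec z_{\restrict \omega})$ correctly as $(\alpha \circ \omega)(\vec z)$ and remembering that $\omega$ may send some $z_i$ to $0$, $1$, or a negated literal, but this causes no issue because the lex order is defined on the resulting $0/1$ tuple regardless. I would note in passing that the choice of lex-leader is not essential: any total order with a minimum on each orbit would work, but lex is convenient and standard.
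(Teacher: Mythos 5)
Your proof is correct and follows essentially the same argument as the paper: take a lexicographically minimal satisfying assignment $\alpha$, use the symmetry $\Gamma_{\restrict\omega} = \Gamma$ together with the composition identity to conclude $\alpha \circ \omega \vDash \Gamma$, and invoke minimality to get $\alpha \lelex \alpha \circ \omega$. The only minor wrinkle, which the paper also elides (``working through the definitions''), is that $[\vec z \lelex \vec z_{\restrict\omega}]$ may involve auxiliary variables, so strictly one should say some \emph{extension} of $\alpha$ satisfies it rather than $\alpha$ itself; this does not affect correctness.
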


\begin{proof}
Suppose $\Gamma$ is satisfiable. Let $\alpha$ be a
lexicographically minimal assignment such that $\alpha \vDash \Gamma$.
Then by symmetry  $\alpha \vDash \Gamma_{\restrict \omega}$
and thus $\alpha \circ \omega \vDash \Gamma$.
By minimality we have $\alpha \lelex \alpha \circ \omega$
which, working through the definitions, gives us that $\alpha \vDash [\vec z \lelex \vec z_{\restrict \omega}]$
as required.
\end{proof}

The new formula $\Gamma \cup [\vec z \lelex \vec z_{\restrict \omega}]$
is potentially much easier to solve than the original $\Gamma$, because
the extra constraint can substantially shrink the space of partial assignments
that the solver has to search through. Note that the proof 
still works if we simultaneously add lex-leader constraints for several different symmetries.

It is now common for SAT solvers to have a \emph{proof-logging} component
which, when it decides that a formula $\Gamma$ is unsatisfiable, will generate a certificate
of unsatisfiability -- in other words, a refutation of $\Gamma$. 
For this to be useful, it must be in a well-known proof system for which 
trusted software exists to verify that the refutation is correct. 
The standard system used for this is DRAT~\cite{wetzler2014drat}.
However, it is open to what extent DRAT can (feasibly) handle
reasoning that uses  symmetry-breaking,
especially for more than one symmetry~\cite{heule2015expressing}.

This issue is addressed in~\cite{bgmn:journal}
 which introduces
a new proof-logging system with tools to handle quite general symmetry-breaking,
extending DRAT.
It builds on the system VeriPB~\cite{elffers2020justifying} and as such uses reasoning
using linear inequalities, rather than clauses; furthermore it is equipped to solve optimization problems, rather
than just  satisfiability.

We study the proof complexity of this system (as a refutation system for CNFs).
We show that it is equivalent to~$G_1$, 
a system based on limited reasoning with QBFs which is
one level above extended resolution (ER) 
in a natural hierarchy of proof systems~\cite{kp:quantified-propositional}. 
This is in contrast to the redundancy-based systems which have been studied
recently, such as DRAT, propagation redundancy and substitution redundancy,
which are  in their full generality all equivalent to 
ER~\cite{kiesl2018extended, heule2020strong, buss2021drat}.
This may represent another step in the strength of
reasoning used in SAT algorithms,
like from DPLL and treelike resolution, 
to CDCL and resolution, to the just-mentioned systems and ER~\cite{DLL62, ClauseLearning, beame2004towards}.
In particular, it is unlikely that the new system  in~\cite{bgmn:journal}
can be simulated by DRAT.

The main tool used in~\cite{bgmn:journal} for symmetry-breaking is called the 
\emph{dominance-based strengthening rule}.
It is based on the following principle, which we express here in the language of CNFs
and call ``informal'' because we are sweeping under the carpet the exact nature
of the formulas $\Delta$ and 
$[\vec x \llex \vec y]$ (intended to express strict lexicographic order).

\begin{proposition}[informal] \label{pro:informal_rule}
Let $\Gamma, \Delta$ be formulas, where $\Gamma$ is in variables $\vec z$ and has the property that
any assignment satisfying $\Gamma$ can be extended to satisfy $\Delta$. Suppose
we have a clause $C$ and a substitution $\omega$ such that
\begin{equation} \label{eq:rule_intro}
\Gamma \wedge \Delta \wedge \neg C 
\vDash 
\Gamma_{\restrict \omega} \wedge [ \vec z_{\restrict \omega} \llex \vec z ].
\end{equation}
Then $\Gamma$ and $\Gamma \wedge C$ are equisatisfiable.
\end{proposition}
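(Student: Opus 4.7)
The plan is to adapt the argument used in Proposition~\ref{pro:intro_symmetry}: take a lexicographically minimal satisfying assignment of $\Gamma$ and use the hypothesis to derive something even smaller. One direction is trivial, since any model of $\Gamma \wedge C$ is a model of $\Gamma$, so the work is in showing that if $\Gamma$ is satisfiable then so is $\Gamma \wedge C$.

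Suppose $\Gamma$ is satisfiable, and let $\alpha$ be a lex-minimal assignment to $\vec z$ with $\alpha \vDash \Gamma$. Assume for contradiction that $\Gamma \wedge C$ is unsatisfiable. Using the additional hypothesis, extend $\alpha$ to an assignment $\alpha'$ with $\alpha' \vDash \Gamma \wedge \Delta$. Since $\alpha' \vDash \Gamma$ and $\Gamma \wedge C$ is unsatisfiable, $\alpha'$ must falsify $C$, i.e.\ $\alpha' \vDash \neg C$. So $\alpha' \vDash \Gamma \wedge \Delta \wedge \neg C$ and we can apply the hypothesis~(\ref{eq:rule_intro}) to obtain $\alpha' \vDash \Gamma_{\restrict \omega} \wedge [\vec z_{\restrict \omega} \llex \vec z]$.

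Now unwind the definitions exactly as in the preliminaries. From $\alpha' \vDash \Gamma_{\restrict \omega}$ we get $\alpha' \circ \omega \vDash \Gamma$; since $\Gamma$ only involves $\vec z$, the restriction $\beta$ of $\alpha' \circ \omega$ to $\vec z$ is a total assignment of $\vec z$ satisfying $\Gamma$. From $\alpha' \vDash [\vec z_{\restrict \omega} \llex \vec z]$ we read off that the tuple of values $\alpha'(\omega(\vec z)) = \beta(\vec z)$ is strictly lex-smaller than the tuple $\alpha'(\vec z) = \alpha(\vec z)$. Thus $\beta \llex \alpha$ while $\beta \vDash \Gamma$, contradicting the minimality of $\alpha$.

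The structural part of the argument is very close to the earlier symmetry proof, and I expect no real difficulty beyond bookkeeping. The main point requiring care is just the interface between the extension $\alpha'$ (which may assign variables outside $\vec z$) and the objects $\Gamma$, $[\vec z_{\restrict \omega} \llex \vec z]$ that ultimately live over $\vec z$: one must check that composing with $\omega$ and restricting back to $\vec z$ genuinely produces an assignment to $\vec z$ satisfying $\Gamma$, and that the formula $[\vec z_{\restrict \omega} \llex \vec z]$, once a concrete CNF encoding is chosen, really does force the corresponding strict lex inequality on these $\vec z$-tuples. The latter is the only ``informal'' ingredient, but will be straightforward for any reasonable encoding.
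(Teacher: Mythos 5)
Your proof is correct and follows essentially the same approach as the paper's own sketch: take a lex-minimal model $\alpha$ of $\Gamma$, extend it to satisfy $\Delta$, apply the entailment to get $\Gamma_{\restrict\omega}$ and strict lex-decrease, and compose with $\omega$ to contradict minimality. The only cosmetic difference is that you phrase the argument as a proof by contradiction assuming $\Gamma \wedge C$ is unsatisfiable, while the paper does a direct case split on whether the extended assignment satisfies $C$; these are the same argument.
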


\begin{proof}[Proof (sketch)]
Suppose $\Gamma$ is satisfiable. Let $\alpha$ be a
lexicographically minimal assignment 
such that $\alpha \vDash \Gamma$. Extend $\alpha$ to $\alpha \cup \beta$ satisfying
 $\Gamma \wedge \Delta$.
If $\alpha \cup \beta \vDash C$ then we are done. Otherwise, from the 
entailment~(\ref{eq:rule_intro}) we know
$\alpha \cup \beta \vDash \Gamma_{\restrict \omega} \wedge [ \vec z_{\restrict \omega} \llex \vec z ]$.
Letting $\alpha' = (\alpha \cup \beta) \circ \omega$, we conclude that $\alpha' \vDash \Gamma$
and $\alpha' \llex \alpha$, contradicting the minimality of $\alpha$.
\end{proof}

The rule says roughly: if we have derived $\Gamma$, and have available a suitable $\Delta$, $\omega$
and a proof of entailment~(\ref{eq:rule_intro}), then we can derive $\Gamma \wedge C$.
(In fact, the full rule in~\cite{bgmn:journal} is  more general, since it is not
restricted to the lexicographic ordering.)

Machinery exists to study the strength of rules of this kind by studying how easy it is to prove their soundness
in the first-order setting of bounded arithmetic. Specifically, 
to carry out the proof of
Proposition~\ref{pro:informal_rule} it is sufficient to know that
a nonempty polynomial-time set  (in this case the set of $\alpha$ such that~$\alpha \vDash \Gamma$)
always has a lexicographically least element. This puts the system inside the theory~$T^1_2$,
which is associated with $G_1$ (see below for definitions).

The harder direction, lower-bounding the strength of the rule, uses similar machinery. 
Consider the transformation $\alpha \mapsto \alpha'$ in the proof above.
We claim that, given an arbitrary polynomial time function~$f$, we can construct an instance
of the rule where this transformation is given by~$f$. We do this by defining $\Delta$ to compute~$f$ and
store the resulting values in~$\beta$, 
and defining $\omega$ to pull  these values from $\beta$ back to the $\vec z$ variables,
as $\alpha'$. Observe that in the proof, it is not actually necessary
for $\alpha$ to be lexicographically minimal; the ``local minimality'' property that $\alpha \lelex f(\alpha)$
is sufficient. We show that a converse holds: that using the rule, we can find 
a local minimum of this kind, or to say it more precisely,  we can efficiently derive a contradiction
from the statement that there is no local minimum.

The problem of finding such a local minimum is known as polynomial local search,
or PLS~\cite{jpy:localsearch}. 
It is known that if we are only interested in sentences of low quantifier complexity, 
such as ``CNFs in this family are unsatisfiable'', then
every logical consequence  of ``every polynomial-time set has a least element'',
that is, of $T^1_2$, is already a consequence of the apparently weaker statement
 ``every PLS problem has a solution''.
Using this we can show that, roughly speaking, every CNF which can
be proved unsatisfiable in $T^1_2$ has a short refutation in the proof system in~\cite{bgmn:journal}, 
where the refutation
uses an instance of the rule
constructed from a PLS problem as described above.
It follows that the system simulates~$G_1$.

\medskip

The rest of the paper fills in the details of these arguments.
In Section~\ref{sec:trad} we recall the definitions of some proof systems we will
need. These are $G_1$ itself; cutting planes, which is the foundation for the 
system in~\cite{bgmn:journal}; and $\ER$,
where we will need to work extensively with derivations as well as
 refutations.
In Section~\ref{sec:dominance} we define our version of the system in~\cite{bgmn:journal},
where we have removed the machinery for handling optimization problems.
We call this the \emph{dominance proof system}, and in it
the dominance-based strengthening rule can be used for rather general orderings,
not just lexicographic.
We will work more with a restriction of it,
the \emph{linear dominance proof system}, in which it is limited
to essentially the lexicographic ordering.
We also define a simpler auxiliary proof system $\ERpls$, which 
uses clauses rather than inequalities and captures the properties of the
system that are important for us (illustrated in Proposition~\ref{pro:informal_rule}). 
It  will follow from our results that, as a system for refuting CNFs, it is equivalent to the linear
dominance system.
In Section~\ref{sec:BA} we describe some results we need from bounded arithmetic and give a formal definition
of PLS.
We then show our main result,
\begin{theorem}
The linear dominance system is equivalent to $G_1$.
\end{theorem}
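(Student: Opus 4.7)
The plan has two parts corresponding to the two directions of the equivalence. The easier direction shows that $G_1$ simulates the linear dominance system, by formalizing the soundness of each dominance rule in the bounded-arithmetic theory $T^1_2$ and translating via the standard Krajicek--Pudlak correspondence between $T^1_2$ and $G_1$. The harder direction shows the converse, using a reduction to polynomial local search.

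For the upper bound, the only nontrivial step is justifying the dominance-based strengthening rule. Its semantic correctness, as in the sketch of Proposition~\ref{pro:informal_rule}, requires only the existence of a lexicographically minimal satisfying assignment of a polynomial-time-decidable set, a principle provable in $T^1_2$ via $\Sigma^b_1$-polynomial induction on the length of lex-encodings. The CP-style subproof certifying the entailment~(\ref{eq:rule_intro}) is handled using the soundness of cutting planes and of the clausal fragment of the dominance system built over $\ER$, each of which is provable in much weaker subtheories. A straightforward induction on the length of a linear dominance refutation then gives a $T^1_2$-proof of the corresponding propositional unsatisfiability statement, which translates via Krajicek--Pudlak to a polynomial-size $G_1$-refutation.

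For the lower bound, I would appeal to the Buss--Krajicek witnessing theorem: every $\forall \Sigma^b_1$ theorem of $T^1_2$ is already provable in a weak base theory using only the single principle ``every PLS instance has a solution'', and the resulting proofs translate propositionally to short $\ER$-proofs of the corresponding PLS axioms. Unsatisfiability of a uniform family of CNFs is $\forall \Sigma^b_1$, and $G_1$-refutations of the family correspond to $T^1_2$-proofs of their unsatisfiability; so it suffices to show that the linear dominance system polynomially refutes any CNF $\Gamma$ whose unsatisfiability is witnessed, via a PLS reduction, by a polynomial-time neighborhood function $f$ that maps each satisfying assignment $\alpha$ of $\Gamma$ to a satisfying assignment with $f(\alpha) \llex \alpha$. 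The refutation uses a single application of the strengthening rule: introduce $\ER$-style extension variables $\vec w$ computing $f(\vec z)$, let $\Delta$ be the defining clauses, let $\omega$ send $\vec z$ to $\vec w$, and take $C = \bot$. The entailment~(\ref{eq:rule_intro}) then splits into $\Gamma \wedge \Delta \vDash \Gamma_{\restrict \omega}$ and $\Gamma \wedge \Delta \vDash [\vec z_{\restrict \omega} \llex \vec z]$, both of which admit polynomial-size proofs in the ambient system obtained by translating the PV-level correctness proof of $f$.

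The main obstacle is quantitative: arranging for both translations to preserve polynomial size, and in particular that the witnessing function $f$ extracted from the $T^1_2$-proof has a PV-correctness proof that translates into a short proof of the entailment in the ambient system. A secondary issue is that the linear dominance system is essentially restricted to the lexicographic ordering, whereas general PLS allows arbitrary polynomial-time cost functions; this is resolved by first reducing PLS to a normal form in which the cost of a state is its own lexicographic encoding, so that ``$f$ strictly improves the cost'' is literally ``$f(\vec z) \llex \vec z$''.
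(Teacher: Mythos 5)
Your proposal follows the same two-part structure as the paper and uses the same key ingredients: soundness of the linear dominance system provable in $T^1_2$ via the lexicographic least-number principle for the upper bound, and the PLS witnessing theorem for $T^1_2$ for the lower bound, translated into a single application of the dominance rule that ``runs'' the PLS neighbourhood function. The paper goes through an intermediate system it calls $\ERpls$, which uses clauses instead of PB constraints, purely for modularity; your direct approach is fine but will have to redo the ER-to-CP translation work (including converting extension axioms into redundance-rule steps) in one place. On the upper bound, note that the paper's validity invariant is $\Pi^b_2$, so the length induction is carried out in $S^2_2$ and one then appeals to $\forall\Sigma^b_2$-conservativity over $T^1_2$; your phrase ``$\Sigma^b_1$-polynomial induction'' is not quite what is needed, though the lex-LNP itself is indeed a $T^1_2$ principle.

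There is one genuine gap in the lower bound. You claim it suffices to handle a CNF $\Gamma$ together with a polynomial-time function $f$ that maps every satisfying assignment of $\Gamma$ to a lex-smaller satisfying assignment. But the PLS witnessing theorem does not give you this: it gives a PLS problem $Q_w = (t_w, \theta_w, N_w)$ whose \emph{domain} is a separate polynomial-time predicate $\theta_w$, not the set of satisfying assignments of $\Gamma$. The neighbourhood function $N_w$ decreases the cost only within the domain $\theta_w$, and the PV-provable correctness statements say that a solution of $Q_w$ (a domain element whose neighbour leaves the domain or fails to decrease) yields a witness to $\neg\Sat \vee \neg\Refu$; they say nothing about satisfying assignments of $\Gamma$ directly. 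To make your construction work you need to first \emph{augment} $\Gamma$ to $\Gamma' := \Gamma \wedge [\theta(\vec x, \vec y)]$, deriving the feasibility clauses from $\Gamma$ by extension steps that compute the initial PLS point $t_w(\vec x)$ and invoke the translated PLS axiom ``$t_w$ is in the domain'' (the paper's derivation $R1$). Only then does the dominance rule apply, with $\omega$ overwriting $\vec y$ by the computed neighbour and the decrease taking place on the $\vec y$ coordinates, which must be placed first in the lex order. Your normalization remark handles the cost-function issue but not this domain issue, and without the augmentation step there is no function $f$ of the kind you posit. With that fixed, the rest of your sketch matches the paper.
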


The proof is in three parts.
Section~\ref{sec:ERpls_simulates_G1} contains the main technical work of the paper, 
showing, in Theorem \ref{the:ERpls_simulates_G1}, that $\ERpls$ simulates~$G_1$, as sketched above. 
Section~\ref{sec:lin-dom>er-pls} shows that the linear dominance system
in turn simulates $\ERpls$.
For the remaining direction, that $G_1$ simulates linear dominance,
it is enough to show that linear dominance is sound, provably in~$T^1_2$. 
This is in Section~\ref{sec:soundness},
where we also briefly discuss the difference between 
linear dominance and the full dominance system.

In Section~\ref{sec:upper} we study what we \emph{can} prove in ER about fragments of
these systems. 
We show essentially that, in Proposition~\ref{pro:informal_rule}
above, if the mysterious formula $\Delta$ is not present then we do not need
to use minimality, or even PLS, in the proof. This is because without the step where
we extend~$\alpha$ to satisfy~$\Delta$, 
the move from~$\alpha$ to $\alpha'$ does not involve any computation, but amounts to
shuffling the components of $\alpha$ around using the substitution $\omega$,
and for simple graph-theoretical reasons we can compute the $i$th iteration $\omega^i$
in polynomial time without invoking any stronger principles. 
In Section~\ref{subsec:weak-dominance} we use this observation to show 
a technical result, that a 
natural weakening of the linear dominance system is already simulated by ER.
In Section~\ref{sec:ER_breaking} we use a similar construction to show%
\footnote{We believe this is more general than the result about single symmetries
in~\cite{heule2015expressing}, since we handle an arbitrary symmetry,
not just an involution; see the discussion in~\cite{bgmn:journal}.}:

\begin{theorem} \label{the:intro_Q1}
Define the system $Q_1$ as $\ER$ plus the power to add a lex-leader constraint for a single symmetry.
Then  $Q_1$ is simulated by $\ER$ (and thus by $\DRAT$).
\end{theorem}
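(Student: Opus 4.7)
The plan is to convert any $\ER$-refutation $\pi$ of $\Gamma \cup [\vec z \lelex \vec z_{\restrict \omega}]$ into an $\ER$-refutation of $\Gamma$ of polynomial size. The idea is to introduce, via extension variables, a canonical representative $\vec w$ of the $\omega$-orbit of $\vec z$ which satisfies both $\Gamma$ and the lex-leader constraint, and then apply $\pi$ under the substitution $\vec z \leftarrow \vec w$.

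First, since $\omega$ is a substitution respecting negations, it acts as a permutation on the $2n$ literals with all cycles of length at most $2n$. Consequently, for any exponent $i$ the literal $\omega^i(z_j)$ is computable in polynomial time by repeated squaring, which can be internalized in $\ER$ via extension variables $v^{(k)}_j \leftrightarrow \omega^{2^k}(z_j)$ for $k = 0, \dots, O(n)$; this suffices to express $\omega^i$ for any $i$ up to the order $M \le 2^{O(n)}$ of $\omega$, even though $M$ itself may be superpolynomial.

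Next, I would introduce $\vec w := \vec z_{\restrict \omega^{i^*}}$ via extension variables, with $i^*$ chosen so that $\vec w$ is the lexicographic minimum of the $\omega$-orbit of $\vec z$. The exponent $i^*$ can be computed greedily using the Chinese Remainder Theorem: processing variables $z_1, \dots, z_n$ in lex order, the bit $(\vec z_{\restrict \omega^i})(z_j)$ depends only on $i \bmod \ell_j$, where $\ell_j \le 2n$ is the cycle length of $z_j$'s orbit, so at each step we add a modular constraint that minimizes this bit provided it is compatible with the accumulated constraints. The entire greedy procedure, together with the extension variables defining $\vec w$ from $\vec z$ and from the bits of $i^*$, fits into polynomial-size $\ER$.

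Finally, I would derive $\Gamma(\vec w)$ and $[\vec w \lelex \vec w_{\restrict \omega}]$ within $\ER$, and then substitute $\vec z \leftarrow \vec w$ into $\pi$ to obtain $\bot$. For each clause $C \in \Gamma$, the clause $C(\vec w)$ equals $C_{\restrict \omega^{i^*}}(\vec z)$, which by $\omega$-symmetry of $\Gamma$ lies in $\Gamma$; moreover, the orbit of $C$ under $\omega$ is a subset of $\Gamma$, hence has polynomial size, so a case analysis over this orbit yields $C(\vec w)$ in polynomial size. The constraint $[\vec w \lelex \vec w_{\restrict \omega}]$ follows because $\vec w$ is lex-minimal in its orbit and $\vec w_{\restrict \omega}$ is another element of the same orbit. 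The main obstacle will be making the greedy CRT-based construction of $i^*$ and the explicit verification of lex-minimality fully rigorous as polynomial-size $\ER$ derivations; this is technical rather than conceptual, and is tractable precisely because for a single symmetry $\omega$ the relevant cycle structure is polynomially bounded, matching the construction used for weak dominance in Section~\ref{subsec:weak-dominance}.
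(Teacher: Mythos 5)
Your high-level strategy --- introduce via extension variables an assignment $\vec w$ that provably satisfies $\Gamma \wedge [\vec z \lelex \vec z_{\restrict \omega}]$, then substitute it into the given refutation $\pi$ --- matches the direct-circuit construction sketched at the end of Section~\ref{sec:ER_breaking}. But the way you propose to compute $\vec w$ is a genuine gap. You want $\vec w = \vec z_{\restrict\omega^{i^*}}$ where $i^*$ makes this the exact lexicographic minimum of the $\omega$-orbit, computed by a greedy CRT procedure in which, at step $j$, one must decide whether the residue constraints accumulated from steps $1,\dots,j-1$ (sets of residues modulo $\ell_1,\dots,\ell_{j-1}$) are compatible with forcing bit $j$ to $0$, i.e.\ with a further restriction modulo $\ell_j$. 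When the cycle lengths are not pairwise coprime, this is a simultaneous-congruences-with-restricted-residues feasibility question, which is NP-hard in general (Simultaneous Incongruences is a special case), and committing to a single residue at each step instead of a set fails because an early choice can block a better bit later. So there is no apparent polynomial-size circuit for $i^*$, and hence no polynomial-size set of extension axioms defining $\vec w$ --- and indeed computing the exact lex-minimum of the orbit of a cyclic permutation group seems to be hard precisely for this reason. A secondary issue: the paper's notion of symmetry allows non-bijective substitutions, while your cycle-structure and repeated-squaring analysis tacitly assumes $\omega$ is a permutation.

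The paper sidesteps both problems by not computing the exact orbit minimum. One only needs \emph{some} $\beta$ with $\beta \vDash \Gamma$ and $\beta \lelex \beta\circ\omega$ --- a local minimum with respect to a single $\omega$-step. Writing $\alpha_i := \alpha\circ\omega^i$, if $\alpha_{i+1} \llex \alpha_i$ held for all $i$ then the potential $2^n - i$ would eventually be driven below $0$, so one can binary-search over $i$ in $\{0,\dots,2^n+1\}$ for a place where ``$\alpha_i \vDash \Gamma$ and $\alpha_i \lelex 2^n-i$'' holds at $i$ but fails at $i+1$; since $\omega$ being a symmetry keeps $\alpha_{i+1}\vDash\Gamma$, the failing condition must be the bound, giving $\alpha_i \lelex \alpha_{i+1}$. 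The only nontrivial ingredient is computing $\omega^i$ for $i$ given in binary, which is exactly Lemma~\ref{lem:iterate_omega} (and which handles the tail-plus-cycle structure of a general substitution). The paper's primary proof is in fact carried out in bounded arithmetic --- soundness of $Q_1$ is proved in $S^1_2$ and Theorem~\ref{thm:t12-g1} is invoked --- with the direct circuit version given as a supplementary remark; if you want a direct construction, replacing your CRT greedy by this binary search is the missing idea.
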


Appendix~\ref{sec:technical_appendix} contains some technical
material postponed from other sections.

\medskip

We finish this section by addressing the natural question: what is this
hierarchy of proof systems above $\ER$, and why should we expect it to be strict?
After all, $\ER$ is already a very strong system with many tools for proving combinatorial
and algebraic
statements, and seems to lie well beyond  current methods for proving 
lower bounds~\cite{razborov2021survey}.

$\ER$ was shown in~\cite{cook:feasibly} to correspond to the theory $\PV$,
which models reasoning with polynomial-time concepts 
(see Section~\ref{sec:BA} for definitions).
In ~\cite{kp:quantified-propositional} the systems $G_0, G_1, G_2, \dots$
of quantified Boolean reasoning 
were introduced, to correspond in the same way to 
a hierarchy $T^0_2, T^1_2, T^2_2, \dots$ of bounded arithmetic theories extending $\PV$
(which we can identify with~$T^0_2$~\cite{jerabek:sharply}), where 
$T^1_2$ can reason with $\mathrm{P^{NP}}$ concepts, $T^2_2$ with $\mathrm{P^{\Sigma^p_2}}$, and so on.
In particular, if we ignore issues of uniformity, the unsatisfiable
CNFs with short refutations in $G_i$ capture precisely the 
universal sentences%
\footnote{That is, sentences consisting of a sequence of unbounded universal quantifiers
followed by a polynomial time predicate. We could also write $\forall \Pi^b_1$.}
 provable in~$T^i_2$.
It is a classical result that the fragments $\mathrm{I}\Sigma_0, \mathrm{I}\Sigma_1, \dots$
of Peano arithmetic are separated by universal sentences. 
Specifically, the consistency statement for $\mathrm{I}\Sigma_k$ has this form and is provable
in $\mathrm{I}\Sigma_{k+1}$ but not in $\mathrm{I}\Sigma_k$ (see e.g.~\cite[Chapter I.4(c)]{hajek-pudlak}).
It is expected, essentially by analogy, that the analogous theories $T^0_2, T^1_2, \dots$
are also separated at the universal level by some kind of consistency statement,
although it is known that classical consistency will not work~\cite{wp:boundedinduction}.

In the case we are interested in here, of $\PV$ and $T^1_2$, there is some evidence
of separation at the $\forall \Sigma^b_1$ level (one step above universal) 
since it is a logical version of the question:
is the TFNP class PLS different from FP? 
Here we at least have a relativized separation between PLS and FP~\cite{bk:boolean}, although
this implies nothing directly about the unrelativized theories.

\section{Traditional proof systems} \label{sec:trad}

We require that proof systems are sound 
and that refutations in a given system are recognizable in polynomial time.
When comparing two systems $P$ and $Q$ we are usually interested in their
behaviour when refuting CNFs, and we use the following basic definition.%

\begin{definition}\label{def:simulates}
We say that $Q$ \emph{simulates} $P$
if there is a polynomial-time function which, given a $P$-refutation of a CNF $\Gamma$,
outputs a $Q$-refutation of~$\Gamma$. 
$Q$ and $P$ are \emph{equivalent} if they simulate each other.
\end{definition}

Often it will make sense to discuss not only refutations of formulas,
but also \emph{derivations} of one formula from another.
We will use the notation e.g. ``a derivation $\Gamma \vdash \Delta$''
instead of ``a derivation of $\Delta$ from $\Gamma$'' and will write $\pi : \Gamma \vdash \Delta$
to express that $\pi$ is such a derivation.

\subsection{Quantified Boolean formulas and $G_1$} \label{sec:G1}
$G_1$ is a fragment of $G$, a proof 
system used for reasoning with quantified Boolean formulas. We give only a brief 
description of $G$ 
-- for more details see \cite[Chapter 4]{krajicek:proof-complexity}.
We will only consider fragments of $G$ as systems for refuting CNFs; for comparisons
of $G$ with some other systems in the context of proving quantified Boolean formulas, 
see e.g.~\cite{bp:understanding, ch:relating-powerful}. 

A \emph{quantified Boolean formula}, or QBF, is built from propositional variables and connectives
in the usual way, and also allows quantification over Boolean variables. That is,
if $F(x)$ is a QBF containing a 
variable $x$, then so are $\exists x \, F(x)$ and
$\forall x \, F(x)$.
In this context $\exists x$ and $\forall x$ are Boolean quantifiers
 and these formulas semantically have the expected meanings
$F(0) \vee F(1)$ and $F(0) \wedge F(1)$. 
We stratify QBFs into classes called $\Sigma^q_1$, $\Pi^q_1, \Sigma^q_2, \Pi^q_2$ etc. in the usual way, by counting
quantifier alternations. In particular, $\Sigma^q_1$ is the closure of the class of (quantifier-free) Boolean formulas
under $\vee, \wedge$ and $\exists$.
The strength of proof systems working with QBFs
is that they allow us to represent an ``exponential-size concept''
such as $\bigvee_{\vec a \in \{0,1\}^n} F(\vec a)$
with a polynomial-size
piece of formal notation $\exists x_1 \dots x_n F(\vec x)$.

The proof system $G$ is an extension of the propositional sequent calculus.
In this context a \emph{sequent} is an expression of the form 
\[
A_1, \dots, A_k \longrightarrow B_1, \dots, B_\ell
\]
where $A_1, \dots, A_k$ and $B_1, \dots, B_\ell$ are QBFs. Such a sequent is understood
semantically to mean the same as $\bigwedge_i A_i \rightarrow \bigvee_j B_j$,
and we say that an assignment \emph{satisfies} a sequent if it satisfies this formula. 
A derivation in $G$ is a sequence of sequents, 
each of which is either an axiom of the form $A \longrightarrow A$, or follows from one 
or two earlier sequents by one of the rules. These rules are sound and complete
and we will not list them, as the details are not so important for us
(see Section~\ref{sec:BA} for our justification of this).
For $k \in \NN$
the system $G_k$ is the restriction of $G$ which only allows formulas
from $\Sigma^q_k \cup \Pi^q_k$ to appear in derivations.

We are interested in proof systems as ways of refuting CNFs. To turn $G_1$
into such a system we have the following definition, where for
definiteness we think of $\Gamma$ as a single QBF (rather than, say,
as the cedent 
given by its clauses) and where $\bot$ 
is the empty cedent.

\begin{definition}
A \emph{$G_1$ refutation} of a CNF $\Gamma$ is a $G_1$ sequent calculus
derivation of the sequent $\Gamma \longrightarrow \bot$.
\end{definition}

\subsection{Pseudo-Boolean constraints and cutting planes} \label{sec:CP}

Following~\cite{bgmn:journal}, we use the term \emph{pseudo-Boolean constraint},
or \emph{PB constraint}, for a linear inequality with integer coefficients over $0/1$-valued
variables. We will sometimes call a set of such constraints a \emph{PB formula}.
PB constraints generalize clauses, since a clause 
$C = x_1 {\vee} \dots {\vee} x_n {\vee} \neg y_1 {\vee} \dots {\vee} \neg y_m$ 
can be expressed by a constraint of the form $x_1 + \dots + x_n + (1-y_1) + \dots + (1-y_m) \ge 1$.
We call this constraint $C^*$ and will also write $\Gamma^*$ for the
PB formula obtained by taking $C^*$ for each clause $C$ in a CNF~$\Gamma$.

If $C$ is a PB constraint $A\vec x \ge b$ we write $\neg C$ 
for the PB constraint $A \vec x \le b -1$.
Note that, although it is semantically the same, this denotes a different
piece of syntax from $\neg C$ when $C$ is a clause.

Given a substitution~$\omega$, we write $C_{\restrict \omega}$ 
for the PB constraint obtained by simply replacing each variable~$x$ in~$C$
with $\omega(x)$, and we use a similar notation for PB formulas.

We use \emph{cutting planes}
\cite{cook1987complexity}, or CP, as a derivational system for deriving one
PB formula~$G$ from another PB formula $F$.
A CP derivation is a sequence of PB constraints, including
every constraint from $G$, such that each constraint is either from $F$, or 
is a Boolean axiom of the form $x_i \ge 0$ or $x_i \le 1$, or
follows
from earlier constraints by one of the rules. 
These are \emph{addition} -- we can derive a new constraint
by summing integral multiples of two old constraints; and \emph{rounding}
-- from a constraint $d A \vec x \ge b$, where $d>0$ is an integral 
scalar and $A$ is an integral matrix of coefficients, we can derive $A \vec x \ge \lceil b/d \rceil$.
We use the notation $F \vCP G$ for CP derivations.
In a formal CP derivation, coefficients are written in binary.

\subsection{Extended resolution as a derivational system} \label{sec:ER_derivation}

For CNFs $\Gamma$ and $\Delta$, a \emph{resolution derivation}
$\Gamma \vdash \Delta$ is a sequence of clauses, beginning with the clauses
of $\Gamma$ and containing every clause from $\Delta$, such that each clause
in the sequence is either in the initial copy of $\Gamma$ or is derived from earlier clauses
by the \emph{resolution} or \emph{weakening} rule. 
Here the resolution rule derives $C \vee D$ from $C \vee x$ and $D \vee \neg x$,
for any variable~$x$, and the weakening rule derives $D$ from $C$ for any $D \supseteq C$
(although see below for a restriction on weakening in the context of extended resolution derivations). 
Because we allow propositional literals $0$ and $1$ to occur in clauses, 
we need to be able to remove them, so we also allow derivation of $C$ from $C \vee 0$
(this can be thought of as resolution with a notional axiom ``$1$'').
A \emph{resolution refutation} of $\Gamma$ is a derivation of $\Gamma \vdash \bot$.

An \emph{extension axiom} has the form of three clauses
$\neg u \vee \neg v \vee y$,
$\neg y \vee u$ and $\neg y \vee v$
which together express that $y$ is equivalent to $u \wedge v$. 
The intended use is that $u, v$ are literals and $y$ is a newly-introduced variable.
For good behaviour under restrictions we  also allow 
extension axioms of the form $\neg u \vee y$, $\neg y \vee u$ expressing that $y$
is equivalent to a single existing literal, and of the form $y$ or $\neg y$
expressing that $y$ is equivalent to a constant.

\begin{definition} \label{def:ER_derivation}
For CNFs $\Gamma$ and $\Delta$,
an \emph{extended resolution  ($\ER$)} derivation $\Gamma \vdash \Delta$ 
is a sequence of clauses, beginning with $\Gamma$ and including every clause in~$\Delta$.
Each clause in the sequence either appears in the initial copy of $\Gamma$, or is
derived from earlier clauses by resolution or weakening (where weakening is not
allowed to introduce a variable that has not appeared earlier in the sequence\footnote{
This restriction on  weakening is probably not strictly necessary.
We include it because it has the helpful consequence that each new variable comes with an explicit definition in terms of the old variables.}) or by the \emph{extension rule}, which allows us to introduce
an extension axiom defining a variable that has not appeared earlier in the sequence
from variables that have appeared earlier.
\end{definition}

Such a derivation is sound in the following sense: any assignment to all variables
in $\Gamma$ which satisfies $\Gamma$ can be extended to an assignment to
all variables in $\Delta$ which satisfies $\Delta$. The extension axioms in the derivation
tell us explicitly how to extend the assignment.

We will often need to handle many extension axioms at once:

\begin{definition} \label{def:extension_set}
Let $\vec x, \vec y$ be disjoint tuples of variables. We say that \emph{$\Delta$ 
is a set of extension axioms over $\vec x ; \vec y$} if it
can be written as a sequence of extension axioms defining variables $y_1, \dots, y_r$ in order,
where each $y_i$ is defined in terms of variables from among~$\vec x, y_1, \dots, y_{i-1}$.
\end{definition}

Equivalently, such a $\Delta$ can be thought of as describing a Boolean circuit and asserting
that, on input $\vec x$, the values computed at the internal nodes are $\vec y$.
We also introduce  notation for writing sets of extension axioms in this way:

\begin{definition} \label{def:extension_circuit}
Suppose $\vec x, \vec y$ are  tuples of  variables
 and $C$ is a circuit with gates of fan-in 2.
We write $[\vec y = C(\vec x)]$ for the set of extension
axioms over $\vec x ; \vec y$ expressing that the non-input nodes of the circuit
have values $\vec y$ on inputs $\vec x$ (we assume $\vec x, \vec y$ have suitable arities). 
If the circuit has a distinguished output
node we label the corresponding variable in $\vec y$ as $y^\out$.
\end{definition}

The condition that the extension rule must introduce new variables 
has some counter-intuitive consequences, and we must take extra care
 when we use $\ER$ as a derivational system.
For example, there are $\ER$ derivations $\top \vdash x$ 
and $\top \vdash \neg x$, (where $\top$ is the empty CNF)
but there is no derivation $\neg x \vdash x$, even though $\neg x$ extends~$\top$, 
and no derivation $\top \vdash x \wedge \neg x$. In many ways the new variables behave like existentially quantified bound variables.

Under some reasonable conditions on how extension variables
are used, we can avoid problems related to such issues. 
Below we formally prove two lemmas of this kind,
which we will refer to as needed.
We could avoid the issue by working with some other
system equivalent to $\ER$, such as extended Frege, circuit Frege~\cite{jerabek:dual-wphp} or even treelike~$G_1$,
but it would then be harder to show that the resulting system is simulated 
by linear dominance.

We use a convention that, 
in the context of $\ER$ derivations, when  we write an expression of the form
$\Gamma(\vec x) \vdash \Delta$ with some variables $\vec x$ displayed on the left,
we mean that every variable in $\vec x$ is treated as an ``old'' variable 
in this derivation and as such is not used as an extension variable in any instance of the extension rule,
even if it does not actually appear in the CNF~$\Gamma(\vec x)$.

\begin{lemma} \label{lem:compose_ER_derivations}
Let $\Gamma(\vec x), A(\vec x, \vec y), B(\vec x, \vec z), \Delta(\vec x, \vec w)$ by CNFs,
where we assume $\vec x$, $\vec y$, $\vec z$, $\vec w$ are disjoint and  no other
variables appear. Suppose we have ER derivations
\[
\pi_1 : \Gamma(\vec x) \wedge A(\vec x , \vec y) \vdash B(\vec x, \vec z)
\quad \textrm{and} \quad
\pi_2 : \Gamma(\vec x) \wedge B(\vec x, \vec z) \vdash \Delta(\vec x, \vec z, \vec w).
\]
Then we can construct  an ER  derivation
$\Gamma(\vec x) \wedge A(\vec x , \vec y) \vdash \Delta(\vec x, \vec z, \vec w)$ in polynomial time.
 \end{lemma}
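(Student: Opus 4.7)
The plan is to obtain the composed derivation by concatenating $\pi_1$ and $\pi_2$, with careful renaming of extension variables to avoid clashes. The key observation is that $\vec z$, being absent from the starting CNF $\Gamma \wedge A$ of $\pi_1$, must be introduced in $\pi_1$ via extension rules, and similarly $\vec w$ must be introduced in $\pi_2$ via extension rules. Beyond these, $\pi_1$ may use further auxiliary extension variables, say $\vec u$ (disjoint from $\vec z$ by the freshness condition on the extension rule), and likewise $\pi_2$ may use further auxiliaries $\vec v$ disjoint from $\vec w$.

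First I would syntactically rename $\vec u$ throughout $\pi_1$ to a tuple $\vec u\,'$ of entirely new names that are disjoint from $\vec x, \vec y, \vec z, \vec w$, and similarly rename $\vec v$ throughout $\pi_2$ to a tuple $\vec v\,'$ disjoint from $\vec x, \vec y, \vec z, \vec w, \vec u\,'$. The renamings are applied consistently both to the extension axioms defining these variables and to every later occurrence; since the defining extension axioms refer only to earlier variables in the respective derivation, the renamed sequences remain valid ER derivations of the same end CNFs (we have not renamed $\vec x, \vec y, \vec z, \vec w$, so $\pi_1$ still derives $B(\vec x, \vec z)$ and $\pi_2$ still derives $\Delta(\vec x, \vec z, \vec w)$).

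Next I would form the combined sequence by listing the clauses of renamed $\pi_1$ followed by the clauses of renamed $\pi_2$, except that the initial copies of the axioms from $\Gamma \wedge B$ in $\pi_2$ are omitted: the clauses of $\Gamma(\vec x)$ already appear at the start of $\pi_1$ as part of the initial CNF $\Gamma \wedge A$, and the clauses of $B(\vec x, \vec z)$ have been derived during $\pi_1$. Each step from the $\pi_1$ portion is justified exactly as before. Each step from the $\pi_2$ portion that is obtained by resolution or weakening still refers only to clauses that occur earlier in the combined sequence, so the justification goes through. The extension steps in $\pi_2$ introduce variables from $\vec w \cup \vec v\,'$, which by the renaming are fresh with respect to every variable appearing earlier in the combined sequence (namely $\vec x, \vec y, \vec z, \vec u\,'$), so the freshness condition on extension is preserved.

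The main obstacle, though a mild one, is to check the weakening restriction from Definition~\ref{def:ER_derivation}: no weakening step in $\pi_2$ may introduce a variable that has not appeared earlier \emph{in the combined sequence}. This is satisfied because every variable used in $\pi_2$ lies in $\vec x \cup \vec z \cup \vec w \cup \vec v\,'$, and $\vec x, \vec z$ occur already in the $\pi_1$ segment, while the variables of $\vec w, \vec v\,'$ first appear in $\pi_2$ through their defining extension axioms, which precede any weakening that uses them since this was already the case inside $\pi_2$. The whole construction is plainly polynomial-time: the renaming is linear in $|\pi_1|+|\pi_2|$, and the combined sequence has length at most $|\pi_1|+|\pi_2|$.
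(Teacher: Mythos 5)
Your proof is correct and follows the same concatenation-with-renaming approach as the paper's own (short) proof. You take slightly more care than the paper by also renaming the auxiliary extension variables $\vec u$ of $\pi_1$ to avoid any accidental clash with $\vec w$, a precaution the paper's one-line argument elides but which your version handles cleanly.
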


\begin{proof}
We first copy $\pi_1$. Then we copy $\pi_2$,
except that every extension variable in $\pi_2$ which is not in $\vec w$ is given a new name,
to avoid clashes with variables $\vec y$ and other extension variables that appeared in $\pi_1$.
\end{proof}

\begin{lemma} \label{lem:ER_add_conclusions_to_assumptions}
Given an $\ER$ derivation $\pi_1 : \Gamma(\vec x) \vdash \Delta \wedge A$,
where $\Delta$ is a set of extension axioms over $\vec x; \vec y$,
we can construct in polynomial time an $\ER$ derivation 
$\pi_2 : \Gamma(\vec x) \wedge \Delta \vdash \Delta \wedge A$.
\end{lemma}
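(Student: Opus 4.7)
The subtlety is that in~$\pi_1$ the variables $\vec y$ must have been introduced by the extension rule, using some set of axioms $\Delta_1$ chosen by $\pi_1$; these may be syntactically different from the axioms in~$\Delta$ (even though both necessarily define the same Boolean function of~$\vec x$, since $\pi_1$ derives $\Delta$). In $\pi_2$ the variables $\vec y$ appear in the assumption~$\Delta$, so the extension rule is forbidden for them and we cannot just copy $\pi_1$ on top of $\Gamma(\vec x) \wedge \Delta$.

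My plan is to pick a tuple $\vec y'$ of fresh variables and to rename~$\vec y$ to $\vec y'$ throughout~$\pi_1$, obtaining a derivation $\pi_1' : \Gamma(\vec x) \vdash \Delta' \wedge A'$ where $\Delta' = \Delta[\vec y \mapsto \vec y']$ and $A' = A[\vec y \mapsto \vec y']$. This is still a valid $\ER$ derivation, because each step of $\pi_1'$ is the image of a step of $\pi_1$ under a consistent renaming. Moreover, we may copy $\pi_1'$ on top of $\Gamma(\vec x) \wedge \Delta$ in $\pi_2$ without violating the freshness requirement on extension variables: the variables in $\Gamma(\vec x) \wedge \Delta$ are $\vec x \cup \vec y$, the $\vec y'$ are fresh by choice, and any extension variables of $\pi_1$ other than $\vec y$ were already fresh with respect to $\vec x$ and are by definition distinct from $\vec y$. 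After running $\pi_1'$ we derive the equivalences $y_i \leftrightarrow y_i'$ for each $i$ from $\Delta \wedge \Delta'$, and then use them to convert each clause of $A'$ into the corresponding clause of $A$ by resolving away the primed literals one at a time. The required conclusion $\Delta \wedge A$ is then present in the sequence, since $\Delta$ itself is among the assumptions.

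The technical heart is the derivation of the equivalences $y_i \leftrightarrow y_i'$, which I would do by induction on~$i$. In the main case the axioms for $y_i$ in $\Delta$ have the form $\{\neg u_i \vee \neg v_i \vee y_i,\, \neg y_i \vee u_i,\, \neg y_i \vee v_i\}$ with $u_i, v_i$ literals over $\vec x, y_1, \dots, y_{i-1}$, and $\Delta'$ contains the analogous axioms for $y_i'$ with $u_i', v_i'$ obtained by priming all $y$-literals. The inductive hypothesis $y_j \leftrightarrow y_j'$ for $j < i$ gives $u_i \leftrightarrow u_i'$ and $v_i \leftrightarrow v_i'$ by unit resolution, and a constant number of resolution steps then chain through the defining axioms to yield $\neg y_i \vee y_i'$ and $\neg y_i' \vee y_i$; the simpler extension-axiom forms permitted by Definition~\ref{def:ER_derivation} are handled identically. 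Each inductive stage costs polynomial size, so the entire construction is polynomial, as required. I expect the main obstacle to be not any one step but the variable bookkeeping, specifically choosing $\vec y'$ and stating precisely which variables must be disjoint from which, in the spirit of the renaming in Lemma~\ref{lem:compose_ER_derivations}.
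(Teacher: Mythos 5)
Your proposal is correct and follows essentially the same approach as the paper: rename $\vec y$ to fresh variables in a copy of $\pi_1$ run on top of $\Gamma \wedge \Delta$, derive the equivalences $y_i \leftrightarrow y_i'$ by induction through the two sets of extension axioms, and resolve these with $A'$ to recover $A$. The paper's proof is terser (it omits the explicit induction and bookkeeping you spell out), but the decomposition and key steps are identical.
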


\begin{proof}
Let $\Delta'$ and $A'$ be the same as $\Delta$ and $A$ except that 
we have replaced every variable~$y_i$ with a new variable~$z_i$.
From $\Gamma \wedge \Delta$, we can derive $\Delta' \wedge A'$ by 
a copy of $\pi_1$ with $\Delta$ added to the initial clauses
and with each $y_i$ changed to $z_i$ everywhere outside of $\Delta$. 
Then we can work through $\vec y$ and derive, from the relevant 
extension axioms in $\Delta$ and $\Delta'$, 
using the normal rules of resolution, that $y_i \leftrightarrow z_i$;
formally, this is the two clauses $\neg y_i \vee z_i$ and $\neg z_i \vee y_i$.
Finally we resolve these clauses with the clauses of $A'$
to derive $A$.
\end{proof}

\section{The dominance rule} \label{sec:dominance}

We define three refutational proof systems using versions of the dominance-based
strengthening rule of~\cite{bgmn:journal}.
The \emph{dominance} proof system is
intended to be the same as the system described in~\cite{bgmn:journal}
 except that we have removed the machinery
for talking about optimization, that is, everything related to the objective function~$f$.
The \emph{linear dominance} proof system restricts this by only allowing 
a particular kind of ordering to be used in the dominance rule; 
the practical work in~\cite{bgmn:journal} in fact only needs this weaker
system.
Lastly we introduce our auxiliary system $\ERpls$.

In these  systems, often we can only apply a rule on the condition
that some other derivation $\Gamma \vdash \Delta$ or $\CC \vCP \DD$ exists, in 
$\ER$ or $\CP$, 
possibly involving formulas that 
do not appear explicitly in the proof
we are working on;
or that some other  polynomial-time-checkable 
object exists, such as 
a substitution $\omega$.
To ensure that correctness of a proof is checkable in polynomial time 
 we implicitly require that, in a formal proof in a dominance-based system, each application of the rule
is labelled with an example of the object in question, 
with the size of the labels (that is, the CP derivations, substitutions etc.)
counted towards the size of the formal proof.

\subsection{The dominance proof system} \label{sec:dominance_system}

This is a system for refuting PB formulas. 
As in~\cite{bgmn:journal}, we will call steps in a refutation \emph{configurations},
rather than lines. A configuration is a quadruple 
$( \CC, \DD, \OOl, \vec z )$ where
\begin{itemize}
\item
$\CC$ is a set of PB constraints called \emph{core constraints}
\item
$\DD$ is a set of PB constraints called \emph{derived constraints}
\item
$\OOl(\vec x, \vec y)$ is a PB formula 
where $\vec x$ and $\vec y$ both have the same arity as $\vec z$
\item
$\vec z$ is a tuple of variables.
\end{itemize}
We put no conditions on which variables appear in $\CC$ and $\DD$,
except that the variables $\vec x, \vec y$ in $\OOl(\vec x, \vec y)$ should be thought
of as dummy variables that are not related to the rest of the proof.
In a valid proof, in every configuration the formula $\OOl(\vec x, \vec y)$
defines a preorder and we use this with $\vec z$ to define a preorder~$\preceq$ on assignments,
writing $\alpha \preceq \beta$ if $\OOl(\vec x, \vec y)$ is satisfied under the assignment
that takes $\vec x$ to $\alpha(\vec z)$ and $\vec y$ to $\beta(\vec z)$.

Semantically a configuration can be thought of as asserting that $\CC$ is satisfiable,
and that if we order assignments by $\OOl$ on $\vec z$ as described above, then for any
assignment $\alpha$ satisfying~$\CC$, some assignment $\beta$
with $\beta \preceq \alpha$ satisfies $\CC \cup \DD$
(see Definition~\ref{def:validity} below).

A refutation of a PB formula $F$ is then a sequence of configurations,
beginning with
$(F, \emptyset, \top, \emptyset)$,
where $\top$ is the empty PB formula,
and ending with a configuration in which 
$\CC$ or $\DD$ contains the contradiction $\bot$,
that is, $0 \ge 1$. Each configuration is derived from 
the previous configuration $(\CC, \DD, \OOl, \vec z)$
by one of the following rules:

\paragraph{Implicational derivation rule.}
Derive 
$(\CC, \DD \cup \{ C \}, \OOl, \vec z)$,
if there is a derivation\footnote{In \cite{bgmn:journal}, the derivation is allowed to use some additional inferences beyond those of $\CP$.
For simplicity we omit these,
as in the presence of the redundance-based strengthening rule, even strengthening
$\CP$ here to a system like extended Frege would not make any
difference to the overall dominance system. In particular, our proof of the simulation of
 linear dominance  by $G_1$ in Section~\ref{sec:soundness} would still go through.}
 \mbox{$\CC \cup \DD \vCP C$}.

\paragraph{\textrm{(}Objective bound update rule}\!\!\!\!-- this appears in~\cite{bgmn:journal},
but we omit it from our systems as it only affects the objective function~$f$, which we do not use.)

\paragraph{Redundance-based strengthening rule.}
Derive $(\CC, \DD \cup \{ C \}, \OOl, \vec z)$
if there is a substitution~$\omega$ and a derivation
$
\CC \cup \DD \cup \{ \neg C \}
\vCP
(\CC \cup \DD \cup \{ C \})_{\restrict \omega} \cup \OOl (\vec z_{\restrict \omega}, \vec z).
$

\paragraph{Deletion rule.}
Derive  $(\CC', \DD', \OOl, \vec z)$ if
\begin{enumerate}
\item
$\DD' \subseteq \DD$ and
\item
$\CC'=\CC$ or $\CC' = \CC \setminus \{ C \}$ for some constraint $C$
derivable by the redundance rule above from $(\CC', \emptyset, \OOl, \vec z)$\footnote{This restriction of the deletion rule ensures that it preserves semantic validity under the intuitive meaning of configurations mentioned above. See Section \ref{sec:soundness} for an argument.}.
\end{enumerate}

\paragraph{Transfer rule.}
Derive $(\CC', \DD, \OOl, \vec z)$ if
$\CC \subseteq \CC' \subseteq \CC \cup \DD$.
In other words, we can copy constraints from $\DD$ to $\CC$

\paragraph{Dominance-based strengthening rule.}
We first give a slightly informal definition:
 derive $(\CC, \DD \cup \{ C \}, \OOl, \vec z)$
if there is a substitution $\omega$ and, informally, a derivation
\[
\CC \cup \DD \cup \{ \neg C \}
\vCP
\CC_{\restrict \omega} \cup (\vec z_{\restrict \omega} \prec \vec z)
\]
where $\vec z_{\restrict \omega} \prec \vec z$ expresses that $\vec z_{\restrict \omega}$
is strictly smaller than $\vec z$ in the ordering $\OOl$.
However it may be  that any PB formula expressing the strict inequality $(\vec z_{\restrict \omega} \prec \vec z)$
is very large.
So formally the rule is: derive 
$(\CC, \DD \cup \{ C \}, \OOl, \vec z)$ if there is a substitution 
$\omega$ and two derivations
\begin{align*}
 \CC \cup \DD \cup \{ \neg C \}
& \vCP
\CC_{\restrict \omega} \cup \OOl(\vec z_{\restrict \omega}, \vec z) \\
\CC \cup \DD \cup \{ \neg C \} \cup \OOl(\vec z, \vec z_{\restrict \omega}) & \vCP \bot.
\end{align*}

\paragraph{Order change rule.}
From $(\CC, \emptyset, \OOl, \vec z)$ derive 
$(\CC, \emptyset, \OOl', \vec z')$ if $\OOl'$ is CP-provably a preorder.
That is, if there are derivations
$\emptyset  \vCP \OOl(\vec u, \vec u)$
and $\OOl(\vec u, \vec v) \cup \OOl(\vec v, \vec w)  \vCP \OOl (\vec u, \vec w)$.

\subsection{The linear dominance proof system} \label{sec:linear_dominance_system}

This  restricts the dominance proof system 
to only use orderings $\OOl$ arising from a multilinear objective function.
Formally, we require that $\OOl(\vec x, \vec y)$ is
always of the form $f(\vec x) \le f(\vec y)$, where $f$ is a multilinear function
$\vec x \mapsto \sum_i b_i x_i$ for some constants~$b_i$.
These constants can be changed using
the order change rule.
We are no longer required to explicitly 
prove that $\OOl(\vec x, \vec y)$ is an ordering,
as CP can always prove this for this restricted form.
 
The most important ordering of this form is the lexicographic ordering, 
which we get by setting $f(\vec x) \mapsto \sum_i 2^i x_i$
(for a suitable ordering of the variables in $\vec x$).

\subsection{The system $\ERpls$} \label{sec:ERpls}

This system uses the clausal version of the dominance rule
sketched in Proposition~\ref{pro:informal_rule} in the introduction.
The name is intended to suggest  that it has a similar connection to polynomial local search ``computations''
as $\ER$ has to polynomial time. 

We fix a polynomial-time constructible family of CNFs defining lexicographic ordering.
That is, for each $k$ we have a CNF $[x_1, \dots, x_k \lelex y_1, \dots, y_k]$,
which may also use some auxiliary variables $\vec z$, such that for all
$\alpha, \beta \in \{0,1\}^k$ there is an assignment to $\vec z$
satisfying $[\alpha \lelex \beta]$
 if and only if $\alpha \le \beta$ lexicographically.
It is not too important which CNF we use for $[\vec x \lelex \vec y]$,
but we specify a convenient one in Appendix~\ref{sec:ordering} below.

An $\ERpls$ refutation of a CNF $\Gamma$ is formally a sequence of CNFs, beginning with
$\Gamma$ and ending with a CNF containing the empty clause~$\bot$.
At each step we apply one of the two rules below to derive the next CNF in the sequence.

\paragraph{ER rule.} From $\Gamma$ derive $\Gamma \wedge \Delta$ 
if there is an $\ER$ derivation  $\Gamma \vdash \Delta$, 
in the sense of Definition~\ref{def:ER_derivation}.

\paragraph{Dominance rule.} 
Let $\vec x$ list all variables of $\Gamma$ in some order and let $C$ be a clause in these variables.
From $\Gamma$ derive $\Gamma \wedge C $,
provided we have
\begin{enumerate}
\item
a set $\Delta$ of extension axioms over $\vec x ; \vec y$
\item
a substitution $\omega$ mapping variables  $\vec x$ to variables among $\vec x \cup \vec y$
\item\label{it:erpls-cond-3}
two $\ER$ derivations
\begin{enumerate}
\item
$\Gamma \wedge \Delta \wedge  \neg C \vdash \Gamma_{\restrict \omega}$
\item
$\Gamma \wedge \Delta \wedge  \neg C \wedge [\vec x \lelex \vec x_{\restrict \omega}] \vdash \bot$
\end{enumerate}
\end{enumerate}
with the technical condition that
the auxiliary variables $\vec z$ used in $[\vec x \lelex \vec x_{\restrict \omega}] $
may not appear in $\Gamma$, $\Delta$, or $C$.

\medskip

Informally,  condition~3 can be thought of as asking for a single ER derivation
$
\Gamma \wedge \Delta \wedge \neg C  
\vdash
\Gamma_{\restrict \omega} \, \wedge \, [ \vec x_{\restrict \omega} \! <_\mathrm{lex} \! \vec x ],
$
as in Proposition~\ref{pro:informal_rule}.
Note that we do not have any deletion rule, that is, 
we can only grow working set of clauses $\Gamma$, and never shrink it. 
This is  because $\Gamma$ is modelled on the \emph{core}
constraints $\CC$ in the dominance system, which can only be deleted 
in very specific 
situations. For simplicity we do not allow deletion at all, 
as we will not need it.

\begin{lemma} \label{lem:ERpls_soundness}
If $\Gamma'$ is derived from $\Gamma$ by a rule of $\ERpls$,
then $\Gamma'$ and $\Gamma$ are equisatisfiable.
\end{lemma}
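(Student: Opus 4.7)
The plan is to handle the two rules separately. For the ER rule, soundness is essentially immediate: as noted after Definition~\ref{def:ER_derivation}, an $\ER$ derivation $\Gamma \vdash \Delta$ has the property that every satisfying assignment of $\Gamma$ can be extended (using the extension axioms to compute values for the new variables) to a satisfying assignment of $\Gamma \wedge \Delta$, and any satisfying assignment of $\Gamma \wedge \Delta$ restricts to one of $\Gamma$. So $\Gamma$ and $\Gamma \wedge \Delta$ are equisatisfiable.

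For the dominance rule, write $\Gamma' = \Gamma \wedge C$; the nontrivial direction is that satisfiability of $\Gamma$ implies satisfiability of $\Gamma \wedge C$. Following the template of the proof sketch of Proposition~\ref{pro:informal_rule}, let $\alpha$ be a lexicographically minimal assignment to $\vec x$ satisfying $\Gamma$, and suppose for contradiction that $\alpha \vDash \neg C$. Since $\Delta$ is a set of extension axioms over $\vec x ; \vec y$, in the sense of Definition~\ref{def:extension_set}, there is a unique extension $\alpha \cup \beta$ to $\vec x \cup \vec y$ with $\alpha \cup \beta \vDash \Delta$; then $\alpha \cup \beta \vDash \Gamma \wedge \Delta \wedge \neg C$.

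By ER-soundness applied to derivation~3(a), $\alpha \cup \beta$ extends to a satisfying assignment of $\Gamma_{\restrict \omega}$; but the variables of $\Gamma_{\restrict \omega}$ lie in $\vec x \cup \vec y$ (since $\omega$ maps into $\vec x \cup \vec y$), so already $\alpha \cup \beta \vDash \Gamma_{\restrict \omega}$. By the composition identity from the preliminaries, the assignment $\alpha' := (\alpha \cup \beta) \circ \omega$, viewed as a total assignment to $\vec x$, satisfies $\Gamma$. Now apply ER-soundness to derivation~3(b): the CNF $\Gamma \wedge \Delta \wedge \neg C \wedge [\vec x \lelex \vec x_{\restrict \omega}]$ is unsatisfiable, so no setting of the auxiliary variables $\vec z$ (which by the technical condition do not occur in $\Gamma$, $\Delta$, or $C$, hence are free to be chosen) can extend $\alpha \cup \beta$ to satisfy $[\vec x \lelex \vec x_{\restrict \omega}]$. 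By the defining property of the lex-ordering CNF stated in Section~\ref{sec:ERpls}, this means $\alpha \not\lelex \alpha'$, i.e.\ $\alpha' \llex \alpha$. This contradicts the minimality of~$\alpha$, so $\alpha \vDash C$ after all, giving $\alpha \vDash \Gamma \wedge C$. The converse direction, that satisfiability of $\Gamma \wedge C$ implies satisfiability of $\Gamma$, is trivial.

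The main subtlety is bookkeeping with variables: one needs the extension variables $\vec y$ coming from $\Delta$ and the auxiliary variables $\vec z$ used in the lex CNF to be disjoint from $\vec x$ and from each other, so that (i)~the unique extension of $\alpha$ through $\Delta$ makes sense, (ii)~the composition identity $\tau \vDash \Gamma_{\restrict \omega}$ iff $\tau \circ \omega \vDash \Gamma$ can be applied to $\tau = \alpha \cup \beta$, and (iii)~unsatisfiability of the formula in~3(b) really rules out any choice of $\vec z$. All three points are supplied by the assumptions in the statement of the dominance rule (the typing of $\omega$, the definition of an extension-axiom set, and the explicit technical condition on $\vec z$), so beyond this bookkeeping the proof is routine.
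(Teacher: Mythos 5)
Your proposal is correct and follows essentially the same approach as the paper: the paper's proof simply refers back to the argument for Proposition~\ref{pro:informal_rule} (lex-minimal satisfying assignment, extend through $\Delta$, compose with $\omega$, contradict minimality) and notes the cosmetic changes about having $[\vec x \lelex \vec x_{\restrict \omega}]$ on the left of a refutation rather than a strict inequality on the right and about the auxiliary variables, which is exactly the bookkeeping you carry out explicitly. Your expansion is more detailed but introduces no new ideas beyond what the paper's sketch already contains.
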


\begin{proof}
The only nontrivial case is the forward direction of the dominance rule. 
This is proved in the same way as Proposition~\ref{pro:informal_rule}
in the introduction, with the cosmetic change that we now have
$[ \vec x \lelex \vec x_{\restrict \omega} ]$
on the left of the entailment rather than 
$[ \vec x_{\restrict \omega} \! <_\mathrm{lex} \! \vec x ]$ on the right. We must 
also deal now with the auxiliary variables in  $[ \vec x \lelex \vec x_{\restrict \omega} ]$,
but since these are not in the domain of the ordering this presents no problem.
\end{proof}

\section{Bounded arithmetic} \label{sec:BA}

We will carry out some arguments in  theories of bounded arithmetic,
which we will turn into propositional proofs using variants of well-known 
translations. Here we give a brief overview 
 -- for more see e.g. \cite[Chapters 9 and 12]{krajicek:proof-complexity}.
When we write that a formula with free variables is provable in a first-order theory, we mean that its universal
closure is provable.

\subsection{Theories}\label{subsec:ba}

$\PV$ is the canonical theory for polynomial-time reasoning~\cite{cook:feasibly}. 
Its language contains a function symbol, called a \emph{$\PV$ function}, for every polynomial-time algorithm
on~$\NN$.
Its axioms are defining equations for all $\PV$ functions, based
on Cobham's characterization of polynomial-time functions.
See e.g. \cite[Chapter 12.1]{krajicek:proof-complexity} for a precise definition
(there the theory is called~$\PV_1$). 
Importantly, $\PV$ proves the principle of mathematical induction 
for any property defined by a \emph{$\PV$ formula} -- that is, by a quantifier-free
formula in the language of $\PV$. 
Such formulas define precisely the polynomial-time properties.

More powerful theories  can be obtained by extending $\PV$ with stronger induction axioms. 
A formula in the language of $\PV$ is $\Sigma^b_1$ 
if it has the form $\exists x \! \le \! t\, \varphi$, 
where $t$ is a term not containing $x$ and $\varphi$ is a $\PV$ formula; unsurprisingly,
$\Sigma^b_1$ formulas define exactly properties in NP. A formula is $\Sigma^b_2$
if it has the form $\exists x \! \le \! t_1\,\forall y \! \le \! t_2\, \varphi$ for $\varphi$ a $\PV$ formula.
The classes  ${\Pi}^b_1$ and ${\Pi}^b_2$ are defined dually.

The theory $\mathit{T}^1_2$ (a more accurate name would
be ${T^1_2}(\PV)$) 
extends $\PV$ by induction axioms for all $\Sigma^b_1$ formulas.
$T^1_2$ is the weakest theory that suffices to prove the least number principle for $\Sigma^b_1$ formulas, 
that is,  that every nonempty NP set has a least element; actually, even
the least number principle for $\PV$ formulas already implies $T^1_2$ over $\PV$ \cite{buss:thesis}.

The theory $\mathit{S}^1_2$, intermediate in strength between $\PV$ and $T^1_2$,
extends $\PV$ by the \emph{length induction} axioms for $\Sigma^b_1$ formulas, 
that is, universal closures of statements of the form
\[
\psi(0) \wedge \forall x\, (\psi(x) \to \psi(x+1)) \to \forall x\,\psi(|x|)), \ 
\]
where $\psi$ is $\Sigma^b_1$ and $|\cdot|$ stands for the \emph{length} function
that takes a number $x$ to its length in binary notation. 
The theory $\mathit{S}^2_2$ is a strengthening of $T^1_2$ that additionally contains
length induction for $\Sigma^b_2$ formulas. It should be noted that
$S^1_2$ proves length induction also for $\Pi^b_1$ formulas,
$T^1_2$ proves induction also for $\Pi^b_1$ formulas, and so on.

Ordered by strength, we have 
$\PV \subseteq S^1_2 \subseteq T^1_2 \subseteq S^2_2 \subseteq \dots$.
There is also partial conservativity between some adjacent theories.
In particular, $S^1_2$ is {$\forall \Sigma^b_1$-conservative} over $\PV$ and
$S^2_2$ is {$\forall \Sigma^b_2$-conservative} over $T^1_2$~\cite{buss:axiomatizations}.
This means that if $\psi(x)$ is a $\Sigma^b_1$ formula and $S^1_2$ proves
$\forall x\, \psi(x)$, then $\PV$ proves it as well; analogously for
$\Sigma^b_2$ formulas, $S^2_2$ and $T^1_2$.

\medskip

There is a well-known connection between  propositional proof systems and
arithmetic theories,
linking $\ER$ to $\PV$ (and $S^1_2$) and $G_1$ to $T^1_2$
(and $S^2_2$). 
The following theorem, which shows that
a theory proves the soundness of the corresponding proof system,
can be viewed as an upper bound: it says that, for example, 
$G_1$ is in some sense no stronger than~$T^1_2$.
We use this for our main result in Section~\ref{sec:ERpls_simulates_G1}.

\begin{theorem}[\cite{cook:feasibly, kp:quantified-propositional}] \label{the:provable_G1_soundness}
$S^1_2$ -- and by conservativity, $\PV$ -- proves the CNF-reflection principle for $\ER$: 
``any CNF refutable in $\ER$ is unsatisfiable''. 
Similarly, $S^2_2$ -- and by conservativity, $T^1_2$ -- proves the CNF-reflection principle for
$G_1$: ``any CNF refutable in $G_1$ is unsatisfiable''.
\end{theorem}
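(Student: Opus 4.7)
The plan is to establish both reflection principles by induction on the length of the refutation, showing that under any candidate satisfying assignment $\alpha$ to the variables of $\Gamma$, every line of the refutation is satisfied by a suitable extension of $\alpha$; since the last line is $\bot$ (or the empty cedent), which is never satisfied, no such $\alpha$ can have existed.

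For $\ER$ in $\PV$ (and hence $S^1_2$): given a refutation $\pi$ of $\Gamma$ and an assignment $\alpha$, I would first define a $\PV$ function that processes $\pi$ sequentially and, at each extension axiom $y \leftrightarrow u \wedge v$ (or one of the simpler variants), sets $y$ to the current value of $u \wedge v$, producing a total assignment $\hat\alpha$. The induction invariant ``the clause on line $i$ of $\pi$ is satisfied by $\hat\alpha$'' is a $\PV$ predicate in $i$, so the required length induction is for open formulas and is available in $\PV$. The inductive step is a rule-by-rule check: initial clauses are satisfied by the assumption $\alpha \vDash \Gamma$, extension axioms by the construction of $\hat\alpha$, and resolution and weakening by the standard soundness arguments. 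Taking the contrapositive gives the reflection principle.

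For $G_1$ in $S^2_2$ (and hence $T^1_2$), the same strategy applies, but satisfaction must now be defined for $\Sigma^q_1 \cup \Pi^q_1$ formulas under partial assignments. I would translate $\exists \vec y\,\varphi(\vec x,\vec y) \in \Sigma^q_1$ into the $\Sigma^b_1$ arithmetic statement that such a witness exists, and dually for $\Pi^q_1$; satisfaction of a $G_1$ sequent by $\alpha$ is then a Boolean combination of $\Sigma^b_1$ and $\Pi^b_1$ formulas. The inductive step proceeds rule by rule, with the main case being cut on a $\Sigma^q_1$ or $\Pi^q_1$ formula, which is handled by combining the premises' inductive hypotheses via the arithmetic tautology $A \vee \neg A$. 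The final sequent in the derivation is empty and hence unsatisfied, giving the reflection principle.

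The main obstacle is arranging the induction invariant in the $G_1$ case at a complexity level where the $\Sigma^b_1$- or $\Pi^b_1$-induction available in $S^2_2$ (and $T^1_2$) suffices. Naively, ``every sequent up to line $i$ is satisfied'' is a universal quantification over $i$ of a Boolean combination of $\Sigma^b_1$ and $\Pi^b_1$, which is not automatically $\Sigma^b_1$. One standard resolution is to Skolemize: fix, as additional free parameters, a tuple of candidate witnesses for every $\Sigma^q_1$-formula occurring in $\pi$ under $\alpha$, so that the invariant becomes $\Pi^b_1$ in $i$; the required witness tuple is supplied by $\Sigma^b_1$-LIND applied to the proof length. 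Once the reflection principles have been proved in $S^1_2$ and $S^2_2$, their $\forall \Sigma^b_1$- and $\forall \Sigma^b_2$-conservativity over $\PV$ and $T^1_2$ respectively transfers them, since the reflection principle is in each case a universal statement with a polynomial-time matrix.
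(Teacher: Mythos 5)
Your treatment of $\ER$ is correct and is essentially the paper's argument. The paper phrases the length-induction invariant as ``every line so far is satisfiable,'' a $\Sigma^b_1$ statement, while you explicitly construct an extended assignment $\hat\alpha$ by a $\PV$ function and use the open invariant ``$\hat\alpha$ satisfies line $i$.'' Either works in $\PV$/$S^1_2$; the difference is purely presentational.

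The $G_1$ half has a genuine gap. You correctly note that the truth predicate for a $\Sigma^q_1\cup\Pi^q_1$ sequent under a fixed assignment is a Boolean combination of $\Sigma^b_1$ and $\Pi^b_1$, but the proposed fix --- Skolemizing by fixing witness tuples as free parameters, with the tuple ``supplied by $\Sigma^b_1$-LIND'' --- does not go through. Two separate problems arise. First, the free variables of a line of $\pi$ change throughout the derivation (axioms, weakening, eigenvariables in quantifier introductions), so the invariant must quantify universally over assignments $\alpha$ to those variables; a Skolem witness is therefore a function of $\alpha$, not a single parameter tuple, and ``$\exists$ a witness tuple such that $\forall j \le i$ line $j$ is satisfied'' is $\Sigma^b_2$, not $\Sigma^b_1$. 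Second, and more to the point, the witnesses required in the inductive step cannot be computed from those at the previous step in polynomial time: to handle cut on a $\Sigma^q_1$ formula $A$, one must decide whether $A$ holds under the current assignment, which is NP-complete. If your reduction to $\Sigma^b_1$-/$\Pi^b_1$-level induction did succeed, it would prove the CNF-reflection principle for $G_1$ already in $S^1_2$, and then Theorem~\ref{thm:t12-g1} would yield that $\ER$ simulates $G_1$ --- a collapse that is not expected and is certainly not what the theorem asserts. The intended proof (and the paper's) simply keeps the invariant $\forall\alpha\,\sigma(s_i,\alpha)$ at the $\Pi^b_2$ level and applies $\Pi^b_2$-LIND, which is available in $S^2_2$; the step down to $T^1_2$ is then by $\forall\Sigma^b_2$-conservativity, since CNF-reflection is a universally quantified $\PV$ formula. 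There is no need to force the invariant below $\Pi^b_2$, and no known way to do so.
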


\begin{proof}[Proof sketch]
We first consider~$G_1$. 
The statement ``sequent $s$, containing only formulas from $\Sigma^q_1 \cup \Pi^q_1$,
is satisfied by assignment $\alpha$ to its free variables'' can be naturally
written as a $\Pi^b_2$ formula~$\sigma(s, \alpha)$.
Using this we formalize the natural proof of the soundness of $G_1$ 
as a length induction on a $\Pi^b_2$ formula,
roughly as follows: given a $G_1$ derivation~$\pi$,
we show by induction down~$\pi$ that 
$\forall \alpha\, \sigma(s, \alpha)$ holds for every line~$s$ of~$\pi$. 
If the last line has the form $\Gamma \vdash \bot$ for a CNF $\Gamma$, 
this means that $\Gamma$ cannot be satisfiable.

The part about $\ER$ is proved by a similar argument, but with the length induction hypothesis being
``every line in the $\ER$ refutation $\pi$ up to the current one is satisfiable''. This can be stated
in a $\Sigma^b_1$ way, holds at the beginning of $\pi$ if the CNF $\Gamma$ being refuted is satisfiable,
but no longer holds once $\pi$ reaches the empty clause $\bot$. 
\end{proof}

In the other direction, it is possible to translate proofs in an arithmetic theory
into uniform families of propositional proofs. 
We will use this translation  for $\PV$ and $\ER$,
and we give a slight refinement of it in Section \ref{subsec:prop-transl}.
Such  translations can also be used to give a kind of converse
to Theorem~\ref{the:provable_G1_soundness}, 
with a proof similar to our approach in Section~\ref{sec:ERpls_simulates_G1} below:

\begin{theorem} [\cite{cook:feasibly, kp:quantified-propositional}] \label{thm:t12-g1}
 If $\PV$ (equivalently~$S^1_2$) proves
the CNF-reflection principle for a propositional proof system $Q$, then $\ER$ simulates $Q$.
Similarly, if $T^1_2$ (equivalently~$S^2_2$) proves the CNF-reflection principle for $Q$, 
then $G_1$ simulates $Q$.
\end{theorem}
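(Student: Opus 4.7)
The plan is to derive both statements from the standard propositional translation of bounded-arithmetic proofs: $\PV$-proofs of $\forall \Pi^b_1$ sentences translate into polynomial-time uniform families of $\ER$-refutations (Cook's translation, or the refinement announced for Section 5), and $T^1_2$-proofs of such sentences translate into $G_1$-refutations (Krajíček–Pudlák). In each case the translation is itself polynomial-time in the sizes of the bounded variables, which is exactly the polynomial-time constructibility required by Definition~\ref{def:simulates}.

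Concretely, for the $\PV$/$\ER$ case, write the CNF-reflection principle for $Q$ as
\[
\forall \pi\, \forall \Gamma\, \forall \vec\alpha\, \bigl(\Refu_Q(\pi, \Gamma) \wedge \Sat_{\vec\alpha}(\Gamma) \to \bot\bigr),
\]
which is $\forall \Pi^b_1$ because $\Refu_Q$ and $\Sat$ are $\PV$-predicates. Applying the translation to the assumed $\PV$-proof gives, for each relevant tuple of sizes, a polynomial-size $\ER$-refutation of the propositional translation of the matrix, in which the bits of $\pi$, $\Gamma$ and $\vec\alpha$ appear as propositional variables. Given a concrete $Q$-refutation $\pi^*$ of a concrete CNF $\Gamma^*$, fix the sizes accordingly and substitute the known bits of $\pi^*$ and $\Gamma^*$ for their propositional variables, leaving only $\vec\alpha$ free. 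The antecedent $\Refu_Q(\pi^*, \Gamma^*)$ then becomes a concrete Boolean expression that evaluates to $1$ and can be removed by polynomial-size $\ER$ simplification; what remains is a polynomial-size $\ER$-refutation of $\Sat_{\vec\alpha}(\Gamma^*)$, which is $\Gamma^*$ itself (modulo the fixed clausal encoding of $\Sat$, whose few extra extension axioms can be spliced in using Lemma~\ref{lem:compose_ER_derivations}). The $T^1_2$/$G_1$ case is entirely analogous, the only difference being that the translation now produces $G_1$ refutations because $\Sigma^b_1$-induction in $T^1_2$ is mirrored by $G_1$'s rules for $\Sigma^q_1$ formulas.

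The substantive technical content lies in invoking the translation, so the main obstacle is establishing that it behaves correctly on parameters: one needs the translation to commute with substitution of bit-values for the universally quantified variables, so that evaluating the translated proof on specific $\pi^*, \Gamma^*$ yields a proof of the specialized statement, and one needs to verify that the final cleanup (collapsing the $\Refu_Q$ antecedent and matching the propositional encoding of $\Sat_{\vec\alpha}(\Gamma^*)$ with the actual clauses of $\Gamma^*$) is uniform and polynomial-time. Both points are routine for Cook's translation and its QBF variant, and the refinement promised in Section~\ref{subsec:prop-transl} will state exactly the form needed here, so we can invoke it as a black box.
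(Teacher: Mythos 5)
The paper states this theorem as a citation to \cite{cook:feasibly, kp:quantified-propositional} without giving a proof, merely remarking that the argument is ``similar to our approach in Section~\ref{sec:ERpls_simulates_G1}''. Your argument is precisely the standard proof behind that citation, and it correctly mirrors the construction the paper actually carries out in detail in Section~\ref{sec:ERpls_simulates_G1} (and in the ingredients Proposition~\ref{pro:PV_to_ER} and Proposition~\ref{pro:A_to_phi}): translate the first-order proof of CNF-reflection into a polynomial-size propositional derivation, restrict by the bits of the given $Q$-refutation $\pi^*$ and CNF $\Gamma^*$ so that the $\Refu_Q$ part evaluates away, and splice in the short derivation taking $\Gamma^*$ to the translated $\Sat$-encoding via Lemma~\ref{lem:compose_ER_derivations}. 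The $T^1_2$/$G_1$ half is indeed waved at more quickly in your write-up than the $\PV$/$\ER$ half, but that is also true of the paper, which only spells out the Cook translation and leaves the Kraj\'i\v{c}ek--Pudl\'ak translation as a black box; so the approach and the level of rigour match.
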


Theorems~\ref{the:provable_G1_soundness} and~\ref{thm:t12-g1} together give us a close
association between $T^1_2$  and short $G_1$ proofs, and in fact we prove our results about $G_1$ 
indirectly using these theorems, rather than by reasoning about~$G_1$ itself. 
This is largely the reason that we did not include a complete description of $G_1$ in Section~\ref{sec:G1}.

\medskip

Another important property of $T^1_2$ -- and, by conservativity, of $S^2_2$ -- is that its $\forall \Sigma^b_1$ consequences
can be witnessed by \emph{polynomial local search}. We formally define a PLS problem as 
a triple $(t_w, \theta_w, N_w)$, where $t, N$ are respectively a unary and a binary $\PV$ function, $\theta$~is a binary $\PV$ formula, and the distinguished argument $w$ written in the subscript is an instance of the problem. The formula $\theta$ defines the domain of the problem on instance $w$ (tacitly, any element of the domain is required to be at most polynomially larger than~$w$); $t_w$ is an initial value that should be in the domain; and $N_w$ is a one-place \emph{neighbourhood} function, 
which attempts to map any 
value in the domain to a strictly smaller value in the domain. Since the domain has a least element as long as it is nonempty, $N_w$ will sometimes fail, and a solution to the problem on instance $w$ is either $t_w$, if $\neg \theta_w(t_w)$, or a value $y$ such that $\theta_w(y)$ but either
$\neg \theta_w(N_w(y))$ or $N_w(y) \ge y$. 

The PLS witnessing theorem for $T^1_2$, originally proved in \cite{bk:boolean}, 
says that if $T^1_2$ proves $\forall w \, \exists y\! \le \! t \, \varphi$, 
where $\varphi$ is a $\PV$ formula, 
then the task of finding~$y$ given~$w$ can be reduced to a PLS problem.
Written in a modern form, which also includes an upper bound on the strength of the theory
 needed to prove correctness of the reduction, we have:
 
\begin{theorem}[{\cite[Theorem 2.5]{bb:pls-ph}}] \label{thm:pls}
Assume that $T^1_2 \vdash \forall w \, \exists y\! \le \! t \, \varphi(w,y)$, where $\varphi$ is a $\PV$ formula. Then
there is a PLS problem $Q_w =(t_w, \theta_w, N_w)$ and a $\PV$ function $f$ such that the following are provable in $\PV$:
\begin{enumerate}
\item
 $\neg \theta_w(t_w)  \rightarrow  \phi(w,f(w)) $
 \item
 $\theta_w(z) \wedge \neg \theta_w(N_w(z)) \ \rightarrow \phi(w,f(z))$
\item
 $\theta_w(z) \wedge N_w (z) \ge z  \rightarrow  \phi(w,f(z))$.
\end{enumerate}

\end{theorem}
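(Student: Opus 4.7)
The plan is to prove the theorem by structural induction on a normal-form proof in $T^1_2$ of $\forall w\,\exists y\! \le \! t\,\varphi(w,y)$. First I would invoke free-cut elimination for $T^1_2$ to bring the proof into a shape in which every cut formula lies in $\Sigma^b_1 \cup \Pi^b_1$; this is a standard proof-theoretic move for bounded arithmetic. Equivalently, since $T^1_2$ is axiomatized over $\PV$ by the $\Sigma^b_1$ least number principle, and the $\Sigma^b_1$-LNP is essentially the assertion that every PLS-style problem has a local minimum, one can work with a proof that uses the induction rule only for formulas in this bounded complexity class.

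Next I would establish the following witnessing invariant by induction on the structure of the normalized proof: for every sequent $\Gamma \longrightarrow \Delta$ occurring in it, with $\Gamma$ consisting of $\Pi^b_1$ formulas and $\Delta$ of $\Sigma^b_1$ formulas, one can build a PLS problem $Q_w$ (parameterized by the free variables $w$ of the sequent) together with a $\PV$ function $g$ such that $\PV$ proves: from any solution of $Q_w$, $g$ either produces an existential witness for some formula in $\Delta$, or a universal counterexample to some formula in $\Gamma$. The standard closure properties of PLS under $\PV$-many-one reductions, Cartesian product and disjoint union handle the propositional and structural rules as well as instances of the $\PV$ axioms; the quantifier rules are handled by either plugging in or exposing witnesses that are already available at the previous stage of the induction.

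The main obstacle will be the treatment of the $\Sigma^b_1$-induction rule itself. Here I would build a new PLS problem whose domain consists of stages in an attempted induction along the inductive formula $\psi(x)$: the initial value corresponds to a hypothetical counterexample, and the neighbourhood function uses the already-constructed witnessing data for the inductive step $\psi(x) \to \psi(x+1)$ to strictly reduce some well-founded measure while preserving failure of $\psi$. A solution of this combined problem must fall into exactly one of three categories: the starting point $t_w$ is already outside the domain, the neighbourhood step leaves the domain, or the neighbourhood step fails to strictly decrease. These three categories correspond precisely to the three enumerated conclusions of the theorem, and in each case the $\PV$ function $f$ decodes a witness for $\varphi(w,y)$ from the terminal state.

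Finally I would note that the verification that $f$ does its job uses only case analysis on which type of PLS solution is presented, combined with $\PV$-reasoning about the decoder, rather than any induction; this is why $\PV$ (and not $T^1_2$) suffices to prove the three conditions. This point is important for downstream applications such as the one in Section~\ref{sec:ERpls_simulates_G1}, where the PLS reduction will be translated into short $\ER$-derivations via the propositional translation for $\PV$.
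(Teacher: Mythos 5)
The paper does not prove Theorem~\ref{thm:pls}; it is stated verbatim as a citation of Theorem~2.5 of the Beckmann--Buss reference, with the surrounding text attributing the original PLS witnessing result to Buss--Kraj\'i\v{c}ek. There is therefore no ``paper's own proof'' to compare against -- the authors import it as a black box and use only its statement downstream.

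As a reconstruction of the proof in the cited source, your outline captures the standard proof-theoretic argument (free-cut elimination, a witnessing invariant maintained along the proof, a dedicated PLS construction for the $\Sigma^b_1$-induction rule, and $\PV$-verifiability of the decoder at the end). A few cautions, though, if you wanted to turn this into a real proof. After free-cut elimination the sequents contain $\Sigma^b_1 \cup \Pi^b_1$ formulas but are not automatically in the shape ``$\Pi^b_1$ on the left, $\Sigma^b_1$ on the right''; you need some normalization before the invariant you state even applies. The claim that the three conditions are provable in $\PV$ rather than $T^1_2$ is exactly the ``modern form'' refinement contributed by Beckmann--Buss; establishing it requires checking $\PV$-provability of the invariant and the closure properties at every inference, not just a case analysis at the end, so ``rather than any induction'' is overstated -- one does use $\PV$-induction, just not $\Sigma^b_1$-induction. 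Finally, the equation ``$\Sigma^b_1$-LNP is essentially PLS-totality'' is true only at the level of $\forall\Sigma^b_1$ consequences (the paper itself says this); as a motivational remark it is fine, but as a proof step it would need the actual theorem you are trying to prove.
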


\subsection{Propositional translations}\label{subsec:prop-transl}

We use a version of the translation from $\PV$ proofs to 
polynomial-time constructible families of $\ER$ proofs, due to Cook~\cite{cook:feasibly}.%
\footnote{We emphasize that we are using the Cook translation,
rather than the Paris-Wilkie translation of e.g.~\cite{pw:counting}.
The Paris-Wilkie translation is usually used to translate
first-order proofs involving an oracle symbol into families of small proofs in relatively weak
propositional systems. For example, it translates (a relativized version of) $T^1_2$ into polylogarithmic-width resolution.}
We first describe how to translate formulas.
Consider a $\PV$ formula $\theta(\vec x)$. Let $\vec k$ represent
 a choice of binary bit-lengths for the variables $\vec x$
 (we will not be very formal about~$\vec k$; we can think of this notation as assigning 
 a length to every free first-order variable in the universe). 
Supposing $\vec x = x_1, \dots, x_\ell$, we will code each $x_i$ using
a $k_i$-tuple of new propositional variables $x^1_i, \dots, x^{k_i}_i$, which tuple
we write just as $\vec x_{i}$.

By definition, $\theta$ is a quantifier-free formula built from $\PV$ functions.
So we can construct, in some canonical way based on the structure of~$\theta$,
in time polynomial in the bit-lengths $k_1, \dots, k_\ell$, 
a Boolean circuit~$C(\vec x_1, \dots, \vec x_\ell)$ 
evaluating $\theta$ on binary inputs of these lengths.
Following Definition~\ref{def:extension_circuit}, we introduce
a tuple of new variables $\vec z$, one for each node in~$C$,
and define the propositional translation
\mbox{$\norm{\theta(\vec x)}_{\vec k}$} to be the CNF
$[\vec z = C(\vec x_1 , \dots, \vec x_\ell)] \wedge z^\out$.
Unless stated otherwise, we assume that the translations of any two explicitly listed
formulas have disjoint auxiliary variables~$\vec z$.
For example, in Proposition~\ref{pro:PV_to_ER},
we assume that the translations of $\phi_1, \dots, \phi_r, \theta$
all have disjoint auxiliary variables, even if some formula appears twice in this list.

We now state how we translate proofs. For the proof see Appendix~\ref{appendix_translations}.

\begin{prop} \label{pro:PV_to_ER}
Suppose $\PV$ proves 
a sentence 
$\forall \vec x, \, \phi_1(\vec x) \wedge \dots \wedge \phi_r(\vec x)
\rightarrow \theta(\vec x),$
where $\phi_1, \dots, \phi_r, \theta$ are quantifier-free. 
Then for any assignment $\vec k$ of bit-lengths to the variables $\vec x$,
we can construct in time polynomial in $\vec k$
an $\ER$ derivation 
\[
\norm{\phi_1(\vec x)}_{\vec k} \wedge \dots \wedge \norm{\phi_r(\vec x)}_{\vec k}
\vdash \norm{ \theta(\vec x)}_{\vec k}.
\]
\end{prop}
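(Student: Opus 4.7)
The plan is to proceed by induction on the length of the $\PV$-proof of $\phi_1 \wedge \dots \wedge \phi_r \to \theta$, translating each line of the $\PV$-proof into a polynomial-size $\ER$ derivation of the corresponding CNF. The translation of an atomic formula $s(\vec x) = t(\vec x)$ at bit-lengths $\vec k$ is built by taking the canonical polynomial-size circuits $C_s, C_t$ computing $s, t$ on inputs of these lengths, introducing fresh extension variables for their internal nodes via $[\,\vec z^s = C_s(\vec x)]$ and $[\,\vec z^t = C_t(\vec x)]$, and then asserting the bitwise equality of the outputs; more complex quantifier-free formulas are handled analogously using the Boolean circuits obtained from their subformula structure. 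A uniform procedure on the structure of $\PV$-formulas produces $\norm{\cdot}_{\vec k}$.

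Next, for each axiom of $\PV$ -- propositional tautologies, the equality axioms (reflexivity, symmetry, transitivity, congruence), and the defining equations of $\PV$-function symbols given by Cobham's limited recursion on notation -- I would exhibit a uniformly polynomial-size $\ER$ derivation of the corresponding CNF translation. Propositional tautologies and equality axioms are handled by a direct construction based on the shape of the circuits involved. The delicate case is a defining equation of a function $f$ introduced by limited recursion on $y$: at bit-length $k_y$ the circuit~$C_f$ unrolls the recursion to depth at most $k_y$, and the $\ER$ derivation walks up this unrolling, at each level using the extension axioms provided by $[\vec z = C_f(\vec x, y)]$ together with those for the subcircuits $C_g, C_{h_i}, C_j$ appearing on the right-hand side of the recursion to derive bitwise equality of the outputs. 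For each inference rule of $\PV$ -- modus ponens, generalization on first-order variables, substitution of terms, equality rules -- I would then combine the ER derivations attached to the premises. Modus ponens is straight composition via Lemma~\ref{lem:compose_ER_derivations}; substitution amounts to plugging the circuit for a term into the circuits representing the formula, which is again a congruence-style calculation; the use of Lemma~\ref{lem:ER_add_conclusions_to_assumptions} lets us treat extension axioms as ordinary hypotheses where necessary.

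The main obstacle is the bookkeeping around extension variables: Definition~\ref{def:ER_derivation} requires all extension variables to be fresh and does not permit weakening to introduce genuinely new variables, so one has to be careful that the intermediate circuits in the recursion unrollings, and the circuit fragments introduced when composing derivations for modus ponens or substitution, can always be given fresh names without clashing with variables already present in the ambient derivation or with the auxiliary variables attached to translations of other formulas. The convention that distinct formula occurrences use disjoint auxiliary variables, together with a simple renaming step whenever a previously derived sub-derivation is reused, handles this uniformly. Once the bookkeeping is fixed, the polynomial size bound follows from the fact that each $\PV$-proof step translates into an ER derivation of size polynomial in the size of the step and in $\vec k$, and that there are only polynomially many steps.
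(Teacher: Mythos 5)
Your proposal is a correct high-level plan, but it takes a genuinely different route from the paper. The paper does not re-prove the Cook translation from scratch; it cites the standard result (Theorem~12.4.2 in Kraj\'i\v{c}ek's book) to obtain, in polynomial time, an $\ER$ \emph{refutation} of $\norm{\phi_1} \wedge \dots \wedge \norm{\phi_r} \wedge \norm{\neg\theta}$, and then spends the whole proof on the small but non-trivial transformation from this refutation into an $\ER$ \emph{derivation} $\norm{\phi_1} \wedge \dots \wedge \norm{\phi_r} \vdash \norm{\theta}$. That transformation exploits the particular shape of $\norm{\theta}$ as $\Delta \wedge z^\out$ with $\Delta$ a set of extension axioms: one first introduces $\Delta$ by the extension rule and then runs through the refutation, tacking the literal $z^\out$ onto every clause (and handling extension steps appropriately) until $z^\out$ itself appears where $\bot$ did. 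Your plan instead re-derives the Cook translation end-to-end, by induction on the $\PV$-proof, handling each Cobham axiom and inference rule individually and carrying derivations (rather than refutations) through the induction. That buys self-containedness and avoids the refutation-to-derivation conversion, but it re-does a substantial chunk of work that the paper deliberately invokes as a black box. One thing your sketch should make explicit: a $\PV$ proof of $\bigwedge_i \phi_i \to \theta$ is unconditional, so to land on a derivation with $\norm{\phi_i}$ as hypotheses you would either apply a deduction-theorem step at the outset (moving the $\phi_i$ to the axioms of the $\PV$-proof) or carry out one final manipulation after deriving $\norm{\bigwedge_i \phi_i \to \theta}$. This is a modest gap, easily repaired, but it is exactly the sort of issue the paper's refutation-to-derivation step handles head-on.
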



\section{$\ERpls$ simulates $G_1$} \label{sec:ERpls_simulates_G1}

This section contains the main technical work of the paper.
We begin by constructing some $\PV$ proofs. We will use propositional
translations of these in our simulation.

Let $\Sat(a,x)$ be a natural $\PV$ formula expressing that CNF $a$ 
is satisfied by assignment~$x$. Let $\Refu(a,b)$ be a natural $\PV$ formula
expressing that $b$ is a $G_1$ refutation of $a$.
We may take $\forall a,b,x \ \neg \Sat(a,x) \vee \neg \Refu(a,b)$ as the
CNF-reflection principle for
$G_1$, stating that any CNF refutable in $G_1$ is unsatisfiable. 
By Theorem~\ref{the:provable_G1_soundness} this is provable in $T^1_2$.

CNF-reflection is a universally-quantified $\PV$ formula, so in particular it is $\forall \Sigma^b_1$.
Thus by Theorem~\ref{thm:pls}, 
there is a PLS problem $Q_w =(t_w, \theta_w, N_w)$, where to
save space we think of $a,b,x$ as combined into a single parameter 
$w$ which we write as a subscript, such that the existence of a solution to~$Q_w$
 implies~$\neg \Sat(a,x) \vee \neg \Refu(a,b)$, provably in $\PV$.
Precisely, $\PV$ proves the following three formulas, in free variables~$a,b,x,y$
(note that since the CNF-reflection principle does not contain an existential quantifier, we do not need 
the function $f$ that appears in Theorem~\ref{thm:pls}):
\begin{enumerate}
\item
$\neg \theta_w(t_w) \ \rightarrow \  (\neg \Sat(a,x) \vee \neg \Refu(a,b))$
\item
$\theta_w(y) \wedge \neg \theta_w(N_w(y)) \ \rightarrow \  (\neg \Sat(a,x) \vee \neg \Refu(a,b))$
\item
$\theta_w(y) \wedge N_w (y) \ge y \ \rightarrow (\neg \Sat(a,x) \vee \neg \Refu(a,b))$.
\end{enumerate}
By standard properties of PLS, 
we may assume in order to simplify some things below that the bit-length of $t_w$ depends only on the components $a,b$ of $w$ and not on the assignment~$x$, and that $N_w$ is hard-wired to never give output bigger than~$t_w$.

Making some small rearrangements and introducing a new variable $u$ for
the neighbour of $y$,
we get that $\PV$ proves
\begin{align*}
F1. \quad & \Sat(a,x) \wedge \Refu(a,b) \wedge y =t_w  \ \rightarrow \ \theta_w(y) \\
F2. \quad  & \Sat(a,x) \wedge \Refu(a,b) \wedge \theta_w(y) \wedge u = N_w(y) \ \rightarrow \ \theta_w(u) \\
F3. \quad  & \Sat(a,x) \wedge \Refu(a,b) \wedge \theta_w(y) \wedge u = N_w(y) \wedge y \le u \ \rightarrow \ \bot.
\end{align*}
With these proofs in hand we can  describe the simulation.

\begin{theorem} \label{the:ERpls_simulates_G1}
$\ERpls$ simulates $G_1$.
\end{theorem}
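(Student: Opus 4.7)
The plan is to construct, from a $G_1$-refutation $\pi$ of $\Gamma$, a polynomial-size $\ERpls$-refutation of $\Gamma$, using essentially one application of the dominance rule whose parameters come from the PLS formulation F1, F2, F3 established above. First I specialize the $\PV$-proofs of F1, F2, F3 by fixing $a=\Gamma$ and $b=\pi$ as constants, and apply Proposition~\ref{pro:PV_to_ER} to obtain polynomial-size $\ER$ derivations of their propositional translations $\norm{\mathrm{F1}}$, $\norm{\mathrm{F2}}$, $\norm{\mathrm{F3}}$. These work over propositional tuples $\vec p$ (for the SAT assignment~$x$) and $\vec y, \vec u$ (for the PLS states~$y$ and $u=N_w(y)$).

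Second, I use the ER rule of $\ERpls$ to expand $\Gamma$ to a CNF $\Gamma'$ whose variable tuple includes the PLS-state bits $\vec y$, along with extension variables needed to mention $\theta_w$ and $N_w$. The extension axioms are chosen so that the domain invariant $\theta_w(\vec y)$ becomes $\ER$-derivable within $\Gamma'$ (for the initial state by $\norm{\mathrm{F1}}$, and then propagated via $\norm{\mathrm{F2}}$ along a chain of PLS iterates if needed).

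Third, I apply one dominance step to derive the empty clause. Order $\vec x$ so that the $\vec y$-bits are the high-order lex bits and $\vec p$ the low-order ones; let $\Delta$ compute $\vec u = N_w(\vec y)$; let $\omega$ fix $\vec p$ and map each $\vec y$-variable to the corresponding $\vec u$-variable; and set $C = \bot$. The first obligation $\Gamma' \wedge \Delta \vdash \Gamma'_{\restrict \omega}$ asserts that one application of $N_w$ preserves the invariants encoded in $\Gamma'$, and is discharged using $\norm{\mathrm{F2}}$ combined with the extension-variable bookkeeping provided by Lemmas~\ref{lem:compose_ER_derivations} and~\ref{lem:ER_add_conclusions_to_assumptions}. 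The second obligation $\Gamma' \wedge \Delta \wedge [\vec x \lelex \vec x_{\restrict \omega}] \vdash \bot$ reduces, since $\vec y$ is the high-order part, to the condition $\vec y \lelex \vec u$; this says the PLS is at a local minimum, and together with $\theta_w(\vec y)$ it allows $\norm{\mathrm{F3}}$ to yield $\bot$.

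The main obstacle is the second step: the rigidity of $\ER$'s extension rule forces the newly introduced PLS-state variables to be tied to specific polynomial-time functions of $\vec p$, which conflicts with the freedom we want $\omega$ to have in moving them to the $N_w$-images. A natural resolution is to introduce the PLS state along a polynomial-length chain of extension axioms representing successive $N_w$-iterates, so that $\omega$ acts as a shift along the chain and the propagation of $\theta_w$ through the chain matches $\omega$ step-for-step via $\norm{\mathrm{F1}}$ and $\norm{\mathrm{F2}}$. Arranging this so that the dominance obligations are discharged by polynomial-size $\ER$ derivations, while keeping the overall construction polynomial in $|\Gamma|+|\pi|$, is where the technical heart of the argument will lie.
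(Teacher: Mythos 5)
Your high-level plan matches the paper's: extract a PLS problem $(t_w, \theta_w, N_w)$ from the $T^1_2$-proof of CNF-reflection for $G_1$, translate the $\PV$-proofs F1--F3 into $\ER$ derivations, specialise them by fixing the $\vec a,\vec b$ bits to the given CNF and refutation, add clauses expressing ``$\vec y$ is a feasible PLS solution'' to the working CNF, and then finish with a single dominance step in which $\Delta$ computes $\vec u = N_w(\vec x,\vec y)$, $\omega$ maps $\vec y$-variables to the corresponding $\vec u$-variables, $C=\bot$, and the $\vec y$-bits are made lexicographically most significant. Up to that point you have reconstructed the argument.

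However, the ``main obstacle'' you identify is not a real obstacle, and the ``chain'' resolution you sketch would fail. The obstacle only arises if you insist on keeping the extension axioms $[\vec y = \hat t(\vec x)]$ (or any other clauses tying $\vec y$ to a fixed polynomial-time function of $\vec x$) inside the working CNF~$\Gamma$. The ER rule of $\ERpls$ does not force this: it lets you add to $\Gamma$ any clauses that are \emph{derivable} from $\Gamma$, and the auxiliary extension axioms used along the way need not be added. Concretely, one derives $[\hat\theta(\vec x,\vec y)]$ from $A(\vec x)$ via $R1$ \emph{inside} an $\ER$ derivation that introduces $[(\vec y,\vec z_t)=\hat t(\vec x)]$ as intermediate extension axioms, but then extends $\Gamma$ only by $[\hat\theta(\vec x,\vec y)]$. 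Since the defining axioms for $\vec y$ are \emph{not} in $\Gamma$, the variable $\vec y$ is free to range over all feasible solutions, and $\omega$ may shuffle it to $\vec u$ without creating an unsatisfiable demand in $\Gamma_{\restrict\omega}$.

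Your proposed chain of extension axioms for successive $N_w$-iterates would not repair the problem. First, a PLS problem may require exponentially many iterations to reach a local minimum, so a polynomial-length chain cannot encode the whole descent -- the entire point of the dominance rule is to collapse that exponential iteration into one step, via the well-foundedness of the lexicographic order. Second, and more fatally, if the chain $\vec y^0 = \hat t(\vec x)$, $\vec y^{i+1} = \hat N(\vec x,\vec y^i)$ is fixed by extension axioms in $\Gamma'$, then the whole assignment is determined by $\vec x$; a ``shift'' substitution $\omega(\vec y^i)=\vec y^{i+1}$ would turn $\vec y^1 = \hat t(\vec x)$ into a clause of $\Gamma'_{\restrict\omega}$ that is false in general, so the first dominance obligation $\Gamma' \wedge \Delta \vdash \Gamma'_{\restrict\omega}$ could not be discharged. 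The resolution is simply to leave the defining axioms out of $\Gamma$, as above, and then your step three goes through as written, with Lemma~\ref{lem:ER_add_conclusions_to_assumptions} used to reconcile the part of $[\hat\theta(\vec x,\vec u)]$ already supplied by $\Delta$.
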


\begin{proof}
We are given a CNF $A$ and a $G_1$~refutation $B$ of $A$. We
want to construct, in polynomial time, an $\ERpls$ refutation of $A$.
We will build the refutation using propositional translations of the proofs $F1$-$F3$
above. We begin by calculating the bit-length of the variables $a,b,x,y,u$ which we will use in the translation.

Let $n$ be the number of variables in $A$ and let $m$ and $\ell$
be the bit-length of the strings coding $A$ and $B$ respectively.
We may assume $n \le m$.
We will use $m, \ell, n$ as the respective bounds on the bit lengths
of $a,b,x$.
By our simplifying assumption on the problem $Q_w$
we can find a bound $r$, polynomial in $m$ and $\ell$, on the bit-length of $t_w$ 
for parameters $w$ of the lengths we are considering.
We may also use~$r$ as the bit-length for both $y$ and $u$,
since $N_w$ needs at most $r$ bits to encode its output.
Thus we use these bounds $m, \ell, n, r, r$ as the bit-length parameter $\vec k$ in 
all our propositional translations below. 
As a result, all the CNFs we obtain from the translation 
will have size polynomial in $m + \ell$.
For simplicity of notation,
we will omit actually writing the subscript $\vec k$.

Applying Proposition~\ref{pro:PV_to_ER} to $F1$, $F2$ and $F3$,
we obtain, in time polynomial in~$m + \ell$, the following 
$\ER$ derivations:
\begin{align*}
P1: \quad & \norm{\Sat(a,x)} \wedge \norm{\Refu(a,b)} \wedge \norm{y =t_w}  
	\ \vdash \ \norm{\theta_w(y)} \\
P2: \quad  & \norm{\Sat(a,x)} \wedge \norm{\Refu(a,b)} \wedge \norm{\theta_w(y)} \wedge \norm{u = N_w(y)}
	\ \vdash \ \norm{\theta_w(u)} \\
P3: \quad  & \norm{\Sat(a,x)} \wedge \norm{\Refu(a,b)} \wedge \norm{\theta_w(y)} \wedge \norm{u = N_w(y)} \wedge \norm{y \le u} 
	\ \vdash \ \bot.
\end{align*}
These formulas and derivations are in propositional variables
$\vec a, \vec b, \vec x, \vec y, \vec u$  that arise from $a,b,x,y,u$ in the translation, using the bit-lengths described above (plus the requisite auxiliary and extension
variables).

Now let $\tau$ be the substitution (in fact a partial assignment) that
first replaces the propositional variables $\vec a$ and $\vec b$
with the actual bits of $A$ and $B$. 
Then, recalling that the CNFs $\norm{\Sat(a,x)}$ and $\norm{\Refu(a,b)}$
are defined in terms of Boolean circuits taking $\vec a,\vec b, \vec x$
as input, we compute the values of all nodes in these circuits that do not 
depend on $\vec x$ (for $\norm{\Refu(a,b)}$
this means all nodes) and let $\tau$ assign those values to the corresponding auxiliary variables in the CNFs.

Applying $\tau$ to the derivations above,
each of $P1_{\restrict \tau}, P2_{\restrict \tau}, P3_{\restrict \tau}$ is still a valid ER refutation.
However we may delete $\norm{\Refu(a,b)}_{\restrict \tau}$ from the assumptions,
since by construction $\tau$ satisfies every clause in $\norm{\Refu(a,b)}$,
because $B$ is in fact a refutation of $A$. 
Furthermore we make the following claims, where all the circuits and 
derivations asserted to exist are constructible in polynomial time from $A$ and $B$.
For the notation $[\vec z = C(\vec e)]$ see Definition~\ref{def:extension_circuit}.

\begin{enumerate}
\item
Write $A(\vec x)$ for the CNF $A$ with the variables
renamed to $x_1, \dots, x_n$. 
Then there is an ER derivation $A(\vec x) \vdash \norm{\Sat(a,x)}_{\restrict \tau}$.
\item 
There is a circuit $D_\theta$ and auxiliary variables $\vec z_y$ and $\vec z_u$ such
that 
\begin{enumerate}
\item
$\norm{\theta_w(y)}_{\restrict \tau}$
has the form
$[\vec z_y = D_\theta (\vec x, \vec y)] \wedge z_y^\out$
\item
$\norm{\theta_w(u)}_{\restrict \tau}$
has the form
$[\vec z_u = D_\theta (\vec x, \vec u)] \wedge z_u^\out$.
\end{enumerate}
Abusing notation, we may write these as $[\hat \theta(\vec x, \vec y)]$ and $[\hat \theta(\vec x, \vec u)]$.
\item
There is a circuit $\hat t$ and auxiliary variables $\vec z_t$ such that
there is an ER derivation $[(\vec y, \vec z_t) = \hat t (\vec x)] \vdash \norm{y =t_w}_{\restrict \tau}$.
\item
There is a circuit $\hat N$ and auxiliary variables $\vec z_N$ such that
there is an ER derivation $[(\vec u, \vec z_N) = \hat N (\vec x, \vec y)] \vdash \norm{u =N_w(y)}_{\restrict \tau}$.
\item
There is an ER derivation $[ \vec y \lelex \vec u ] \vdash \norm{ y \le u }_{\restrict \tau}$.
\end{enumerate}

Here claim~2 is true by construction,
and for claim~5 note that the CNF $\norm{ y \le u }$ is not changed 
after restricting by~$\tau$. Otherwise we appeal to the well-known strength and
robustness of $\ER$ 
and the fact that we are able to choose how
to express $\Sat$, and even $y \le x$, in $\PV$. 
We give more details of claim 1
in Appendix~\ref{sec:formalize_sat}.

Combining the derivations provided by the claims 
with $P1_{\restrict \tau}, P2_{\restrict \tau}, P3_{\restrict \tau}$
and appealing to Lemma~\ref{lem:compose_ER_derivations}
we get that the following three $\ER$ derivations can be constructed in 
polynomial time from $A$ and $B$.
\begin{align*}
R1: \quad & A(\vec x) \wedge [(\vec y, \vec z_t) = \hat t (\vec x)] \ \vdash \
	[\hat \theta(\vec x, \vec y)] \\
R2: \quad & A(\vec x) \wedge [\hat \theta(\vec x, \vec y)]
	\wedge [(\vec u, \vec z_N) = \hat N (\vec x, \vec y)] \ \vdash \
	[\hat \theta(\vec x, \vec u)] \\
R3: \quad & A(\vec x) \wedge [\hat \theta(\vec x, \vec y)]
	\wedge [(\vec u, \vec z_N) = \hat N (\vec x, \vec y)] 
	\wedge [ \vec y \lelex \vec u] \ \vdash \ \bot.
\end{align*}

For example, 
for $R2$ we first combine the derivation in claim~1, the first identity in claim~2, 
and the derivation in claim~4 to derive the LHS of $P2_{\restrict \tau}$
from the LHS of $R2$ (we may need to rename some variables
introduced by the extension rule to avoid clashes when we combine derivations,
as in the proof of Lemma~\ref{lem:compose_ER_derivations}). 
By Lemma~\ref{lem:compose_ER_derivations} we can then use
$P2_{\restrict \tau}$ to derive $\norm{\theta_w(u)}_{\restrict \tau}$, which
is precisely the RHS of $R2$ by the second identity in claim 2.

The reader should note that
R1--R3 still capture the same idea that we
began this section with, but now in a nonuniform version: 
they constitute a proof in ER that
if $\vec y$ is a solution
of a PLS problem related to $Q_w$, 
where $w$ is the instance $(A, B, \vec x)$, 
then $A(\vec x)$ is false.

We can now describe an $\ERpls$ refutation of $A(\vec x)$, 
and thus one of $A$. It will use one application of the ER rule
and one of the dominance rule.
We begin with $A(\vec x)$. We then introduce the clauses
$[\hat \theta(\vec x, \vec y)]$ by the ER rule. This is allowed, because
we can obtain them from $A(\vec x)$ by the following ER derivation: first write down the 
extension axioms~$[(\vec y, \vec z_t) = \hat t (\vec x)]$, 
then use $R1$.

To finish the refutation we  use the dominance rule to derive the empty clause.
That is, in the rule we take the new clause $C$ to be empty. 
The other ingredients are as follows.
\begin{enumerate}
\item
We set $\Gamma := A(\vec x) \wedge [\hat \theta(\vec x, \vec y)]$, so
$\Gamma$ consists of all the clauses we have so far.
\item
We let $\vec v := \vec y, \vec z_y, \vec x$ list all variables  that occur in $\Gamma$.
Here we deliberately put~$\vec y$ first so that it is most 
significant in determining the lexicographic order of assignments to $\vec v$.
\item
We set $\Delta := 
[(\vec u, \vec z_N) = \hat N (\vec x, \vec y)] \wedge [\vec z_u = D_\theta (\vec x, \vec u)]$.
This is a set of extension axioms over $\vec v; \vec u, \vec z_u, \vec z_N$.
\item 
We set $\omega$ to be the substitution which maps each variable in $\vec y, \vec z_y$
to the corresponding variable in $\vec u, \vec z_u$ and is the identity everywhere else.
\end{enumerate}
The substitution $\omega$ is chosen so 
that $\Gamma_{\restrict \omega} = A(\vec x) \wedge [\hat \theta(\vec x, \vec u)]$.
Also $\Delta$ and $\omega$ are chosen so that
the range of $\omega$ is a subset of the variables appearing in~$\Delta$,
as required by the rule.
 
The reader should have in mind the following informal
process, as sketched in the introduction. 
Suppose we have an assignment $\alpha$ to $\vec v$ satisfying 
$\Gamma = A(\vec x) \wedge [\hat \theta(\vec x, \vec y)]$. 
By R2, we can use the circuits described in $\Delta$ to extend it to an assignment $\alpha \cup \beta$
to $\vec v, \vec u, \vec z_u, \vec z_N$ satisfying $\Gamma_{\restrict \omega} = A(\vec x) \wedge [\hat \theta(\vec x, \vec u)]$,
and by R3 the $\vec u$-part of $\beta$ must be strictly smaller than the $\vec y$-part of~$\alpha$.
By the construction of~$\omega$, if we let $\alpha' := (\alpha \cup \beta) \circ \omega$
then $\alpha'$ again satisfies $A(\vec x) \wedge [\hat \theta(\vec x, \vec y)]$,
with the $\vec y$-part of $\alpha'$ the same as the $\vec u$-part of $\beta$.
In this way $\Delta$ and $\omega$ work together to simulate one step in the 
exponential-time algorithm to solve PLS by producing smaller and smaller 
``feasible solutions'' $\vec y$ such that $[\hat \theta(\vec x, \vec y)]$. Specifically,
$\Delta$ computes the next solution and writes it on its new variables~$\vec u$, 
then $\omega$ copies the values of $\vec u$ back to the old variables, overwriting $\vec y$.

Formally, to complete the proof we need to construct two ER derivations
\begin{enumerate}
\item[(a)]
$\Gamma \wedge \Delta \vdash \Gamma_{\restrict \omega}$
\item[(b)]
$\Gamma \wedge \Delta  \wedge [ \vec v \lelex \vec v_{\restrict \omega}] \vdash \bot.$
\end{enumerate}
Strictly speaking we should also include $\neg C$ in both sets of assumptions, but
since $C$ is the empty clause omitting this makes no difference.

Writing out (a) in more detail, what we need to show is
\[
A(\vec x) \wedge [\hat \theta(\vec x, \vec y)]
 \wedge 
[(\vec u, \vec z_N) {=} \hat N (\vec x, \vec y)] \wedge [\vec z_u {=} D_\theta (\vec x, \vec u)]
\ \vdash \
A(\vec x) \wedge [\hat \theta(\vec x, \vec u)].
\]
If the clauses $[\vec z_u = D_\theta (\vec x, \vec u)]$ were not present on the left then $R2$ would
already be a derivation of this. However these clauses are part of $[\hat \theta(\vec x, \vec u)]$, 
so we can use $R2$ with an appeal to Lemma~\ref{lem:ER_add_conclusions_to_assumptions}.

For (b), we observe that $[ \vec v \lelex \vec v_{\restrict \omega}]$
means precisely $[ (\vec y, \vec z_y, \vec x) \lelex (\vec u, \vec z_u, \vec x) ]$,
from which formula we can easily derive in ER that $[ \vec y \lelex \vec u]$.
Hence we can use $R3$ and Lemma~\ref{lem:compose_ER_derivations}.
\end{proof}

\section{Linear dominance simulates $\ERpls$} \label{sec:lin-dom>er-pls}

Recall from Section~\ref{sec:CP} the notation $C^*$ and $\Gamma^*$ for converting clauses and CNFs
into equivalent PB constraints and formulas.
We will show that, given a derivation of $\Delta$ from $\Gamma$
in $\ERpls$, we can construct in polynomial time a derivation of $(\Delta^*, \emptyset, \top, \emptyset)$ 
from $(\Gamma^*, \emptyset, \top, \emptyset)$ 
in the linear dominance system, which implies the simulation. 
So we must show how to handle the two rules of $\ERpls$: the ER rule and the dominance rule.

\paragraph{ER rule.} 
This follows straightforwardly by the well-known simulation of resolution by cutting planes~\cite{cook1987complexity}
and, for extension steps, using the redundance-based strengthening
rule (of the linear dominance system) and standard arguments about how to add
extension axioms as redundant clauses, see e.g.~\cite{kullmann1999generalization}.
We include a detailed proof in Appendix~\ref{appendix_simulate_ER}.

\paragraph{Dominance rule.}
Suppose we have a CNF $\Gamma$ and a clause $C$, both in variables $x_1, \dots, x_n$, plus
a set $\Delta$ of extension axioms over $\vec x ; \vec y$,
a substitution $\omega$ mapping variables~$\vec x$ to variables $\vec x \cup \vec y$,
and two $\ER$ derivations
\begin{enumerate}
\item[(a)]
$\Gamma \wedge \Delta \wedge  \neg C \vdash \Gamma_{\restrict \omega}$
\item[(b)]
$\Gamma \wedge \Delta \wedge  \neg C \wedge [ \vec x \lelex \vec x_{\restrict \omega}] \vdash \bot.$
\end{enumerate}
We will describe a derivation from $(\Gamma^*, \emptyset, \top, \emptyset)$
of $(\Gamma^* \cup \{ C^* \}, \emptyset, \top, \emptyset)$.

We can combine the derivations for~(a) and~(b)
above to construct an ER derivation 
\begin{equation*} \label{eq:res_strict}
\pi : \Gamma \wedge \Delta \wedge  \neg C \vdash 
	\Gamma_{\restrict \omega} \wedge [ \vec x_{\restrict \omega} \llex \vec x].
\end{equation*}
where $[\vec x \llex \vec y]$ represents strict lexicographic ordering 
-- see Proposition~\ref{pro:ER_moving_ordering} in Appendix~\ref{sec:ordering}.
Furthermore we may assume that $\pi$ is actually a \emph{resolution} derivation, that is, 
that it does not include any applications of the extension rule. 
This is because we can move all extension axioms introduced by that rule
from the body of the derivation to $\Delta$, preserving the order in which they appeared
in the derivation. 
That process turns $\Delta$ into a set of extension axioms over $\vec x ; \vec y, \vec z$,
where $\vec z$ now includes all extension variables that were introduced
in the original~$\pi$, and in particular all auxiliary variables in~$[ \vec x_{\restrict \omega} \llex \vec x]$.

To build the derivation in the linear dominance system,
we first change the ordering from the trivial order $\top$ to the lexicographic order
on $x_1, \dots, x_n$, with the most significant bits first. This
can be done 
using the order change rule;
see Section~\ref{sec:linear_dominance_system}.
So we are now in the configuration $(\Gamma^*, \emptyset, \OOl, \vec x)$,
where $\OOl$ is the lexicographic order.

We then derive $(\Gamma^*, \Delta^*, \OOl, \vec x)$, where we add each 
extension axiom in $\Delta$ in turn using the redundance-based strengthening rule, 
in the same way that we handle
extension axioms in the the ER rule. 
We must check that we satisfy the order condition for this rule,
but this is easy, since the substitutions used do not affect $\vec x$ variables,
which are the only variables relevant to the ordering.
Again we have moved the details to Appendix~\ref{appendix_simulate_ER}.

Now we use dominance-based strengthening
to derive $(\Gamma^*, \Delta^* \cup \{  C^* \}, \OOl, \vec x)$.
We apply the normal translation from resolution into CP to $\pi$
to get 
\[
\Gamma^* \cup \Delta^* \cup  \{(\neg C)^*\} \vCP 
	(\Gamma_{\restrict \omega})^* \cup [ \vec x_{\restrict \omega} \llex \vec x]^*.
\]
It is easy to construct a short derivation $\neg (C^*) \vCP (\neg C)^*$.
We can also construct a derivation
 $[ \vec x_{\restrict \omega} \llex \vec x]^* \vCP \OO_\prec(\vec x_{\restrict \omega}, \vec x)$
in polynomial time by Lemma~\ref{lem:ordering_derivation} 
in Appendix~\ref{sec:ordering},
where $\OO_\prec$ is strict lexicographic ordering written in the natural 
way using the same multilinear 
function $f$ as~$\OOl$;
this is the same as $L_<$ from Lemma~\ref{lem:ordering_derivation}, except that $L_<$ is
reverse-lexicographic.
Moreover,  $(\Gamma_{\restrict \omega})^*$
is the same as $(\Gamma^*)_{\restrict \omega}$.
Thus we have
\begin{equation} \label{eq:CP_strict}
\Gamma^* \cup \Delta^* \cup  \{\neg (C^*) \} \vCP 
	(\Gamma^*)_{\restrict \omega} \cup \OO_\prec (\vec x_{\restrict \omega}, \vec x).
\end{equation}
Finally, from (\ref{eq:CP_strict}) we can trivially construct a derivation
\begin{equation} \label{eq:CP_non_strict}
\Gamma^* \cup \Delta^* \cup  \{\neg (C^*) \}
	\cup \OOl( \vec x, \vec x_{\restrict \omega}) 
	\vCP \bot.
\end{equation}
The derivations (\ref{eq:CP_strict}) and (\ref{eq:CP_non_strict}) are what
we need to apply the dominance-based strengthening rule to derive $C^*$
(after weakening $\OO_\prec$ in (\ref{eq:CP_strict}) to $\OOl$).
This completes the proof.

\section{$G_1$ simulates linear dominance}\label{sec:soundness}

For this result we will use Theorem \ref{thm:t12-g1}, which states
that for a propositional proof system $Q$, if $S^2_2$ proves the
CNF-reflection principle for $Q$, then $G_1$ simulates $Q$.
We take $Q$ to be the linear dominance system, considered as a system
for refuting CNFs. That is, a $Q$-refutation of a CNF $\Gamma$ is a
linear dominance refutation of $\Gamma^*$.
Thus for the simulation it is enough to prove in $S^2_2$ that the existence
of such a refutation of $\Gamma^*$ implies that $\Gamma$ is unsatisfiable.

We do this by formalizing in $S^2_2$ as much as we can of the proof of soundness
of the dominance system from~\cite{bgmn:journal}.
We run into a problem when dealing with the dominance rule.
To show it is sound, it is enough to show that if a CNF is satisfiable, then 
it has a least satisfying assignment with respect to the ordering $\OOl$.
However as far as we know the general statement of this form, that an
arbitrary ordering has a least element,
is not provable in $S^2_2$, and is known
 to be unprovable if the ordering given is by an oracle~\cite{Riis_1994}.
It \emph{is} provable in $T^2_2$, and hence in $S^3_2$, by a simple inductive argument.
By the methods in this section it follows from this that the full dominance
system is simulated by~$G_2$.

To stay inside the strength of $S^2_2$ we chose to work with the linear
dominance system instead since $T^1_2$, and hence also $S^2_2$,
can prove that any nonempty set of strings has a least element in the lexicographic
ordering, which is enough to prove the soundness of the dominance
rule restricted to such an ordering.\footnote{%
Whether the dominance system is strictly stronger than the linear dominance
system is unclear. Conceivably to bound the strength of the full dominance rule
we could make use of the fact that the ordering is not given by an arbitrary
relation, but by a relation which is provably an ordering in $\CP$.} 

We use a definition from~\cite{bgmn:journal}.

\begin{definition} \label{def:validity}
A configuration $( \CC, \DD, \OOl, \vec z )$ is called \emph{valid} if
\begin{enumerate}
\item
$\CC$ is satisfiable
\item
For every total assignment $\alpha \vDash \CC$, there is
a total assignment $\beta$ with $\beta \preceq \alpha$ and $\beta \vDash \CC \cup \DD$.
\end{enumerate}
\end{definition}

Notice that validity is a $\Pi^b_2$ condition, and in particular $S^2_2$ is strong
enough to do all the basic reasoning we need about sums and inequalities.
Working in $S^2_2$,
suppose for a contradiction we are given a satisfiable CNF $\Gamma$
and a linear dominance refutation $\pi$ of $\Gamma^*$. We will
use $\Pi^b_2$ length induction, taking as our
inductive hypothesis that the $i$th configuration in $\pi$ is valid.
The base case is the initial configuration 
$(\Gamma^*, \emptyset, \top, \emptyset)$, which is
valid by assumption (since any assignment satisfying $\Gamma$ already satisfies $\Gamma^*$). 
On the other hand the final configuration in $\pi$ is not valid,
since in that configuration $\CC \cup \DD$ is not satisfiable.
Therefore to derive a contradiction it is enough to show that every rule preserves validity.

Since the soundness of CP is trivially provable in $\PV$,
 this is easy for the implicational derivation,
transfer and order change rules. For the remaining rules, 
namely redundance-based strengthening, deletion and dominance-based strengthening,
suppose we are at a valid configuration
$( \CC, \DD, \OOl, \vec z )$.

\paragraph{Redundance-based strengthening rule.}
We have a substitution $\omega$ and we know 
\[
\CC \cup \DD \cup \{ \neg C \}
\vDash
(\CC \cup \DD \cup \{ C \})_{\restrict \omega} \cup \OOl (\vec z_{\restrict \omega}, \vec z).
\]
Let $\alpha \vDash \CC$. By the inductive hypothesis  there is $\beta \preceq \alpha$ such that
$\beta \vDash \CC \cup \DD$, and we want to find $\beta' \preceq \alpha$ such that
$\beta' \vDash \CC \cup \DD \cup \{ C \}$.
If $\beta \vDash C$ then we set $\beta' = \beta$. Otherwise we set $\beta' = \beta \circ \omega$,
and the properties of $\beta'$ follow from the assumption.

\paragraph{Deletion rule.}
The interesting case is that we derive
$(\CC', \DD', \OOl, \vec z)$ with $\DD' \subseteq \DD$
and $\CC' = \CC \setminus \{ C \}$ for some $C$
derivable by the redundance rule from $(\CC', \emptyset, \OOl, \vec z)$.
Let $\alpha \vDash \CC'$. If $\alpha \vDash C$ then there is nothing to show.
Otherwise, using the notation of the redundance rule, we let $\alpha'=\alpha \circ \omega$
and know that $\alpha' \preceq \alpha$ and $\alpha' \vDash \CC$.
The inductive hypothesis then gives us $\beta' \preceq \alpha'$ with
$\beta' \vDash \CC \cup \DD$, so in particular
$\beta' \preceq \alpha$ and $\beta' \vDash \CC' \cup \DD'$.

\medskip

For all rules so far, the proof that validity is preserved goes through even in $\PV$.
For the last rule we will need to minimize the value of a $\PV$ function on a polynomial-time computable set,
which can be done in $S^2_2$, since it extends $T^1_2$ 
(see Section \ref{subsec:ba}). 

\paragraph{Dominance-based strengthening rule.}
We derive 
$(\CC, \DD \cup \{ C \}, \OOl, \vec z)$, and for  
 a given substitution $\omega$ we know
 \begin{align*}
 \CC \cup \DD \cup \{ \neg C \}
& \vDash
\CC_{\restrict \omega} \cup f(\vec z_{\restrict \omega}) \le f(\vec z) \\
\CC \cup \DD \cup \{ \neg C \} \cup \{ f(\vec z) \le f(\vec z_{\restrict \omega})\} & \vDash \bot
\end{align*}
where $f$ is the linear function defining $\OOl$.

Let $\alpha \vDash \CC$. We want to find $\beta' \preceq \alpha$ such that
$\beta' \vDash \CC \cup \DD \cup \{ C \}$.
Let $S$ be the set of total assignments $\preceq$-below $\alpha$ satisfying $\CC$.
Let $\beta$ be a member of $S$ for which
$f(\beta)$ is minimal (where $f(\beta)$ stands for $f$ applied to the
$\vec z$ variables of $\beta$). Using the least number principle for $\Sigma^b_1$ formulas
available in $S^2_2$, we can find such a $\beta$.

By the inductive hypothesis (that is, the validity of $(\CC, \DD , \OOl, \vec z)$)
we may assume  that $\beta \vDash \CC \cup \DD$.
If $\beta \vDash C$ then we set $\beta' = \beta$.
Otherwise let $\beta' = \beta \circ \omega$.
By the first entailment in the rule, $\beta' \vDash \CC$ and $f(\beta') \le f(\beta)$,
so $\beta' \in S$.
Therefore by the minimality 
of $f(\beta)$ we have $f(\beta) \le f(\beta')$. 
But this contradicts the second entailment.

\medskip

This completes the proof that $S^2_2$ proves the soundness of the linear dominance
system, which is thus simulated by $G_1$.

\section{Simulations of fragments by $\ER$} \label{sec:upper}

\subsection{Weak linear dominance}\label{subsec:weak-dominance}

Consider the version of the linear dominance system in which we limit
 the dominance-based strengthening rule by only allowing it to be applied
  when the set $\DD$ of derived clauses is empty.
That is, we replace it with the rule:
from $(\CC, \emptyset, \OOl, \vec z)$ derive
$(\CC \cup \{ C \},  \emptyset, \OOl, \vec z)$
if there is a substitution $\omega$ and derivations 
\begin{align*}
 \CC \cup \{ \neg C \}
& \vCP
\CC_{\restrict \omega} \cup f(\vec z_{\restrict \omega}) \le f(\vec z) \\
\CC \cup \{ \neg C \} \cup \{ f(\vec z) \le f(\vec z_{\restrict \omega})\} & \vCP \bot.
\end{align*}
where $f$ is the linear function defining $\OOl$. We shall refer to this system as the \emph{weak linear dominance system}.

\begin{prop} \label{pro:ER_simulation}
The weak linear dominance system is simulated by ER.
\end{prop}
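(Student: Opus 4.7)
The plan is to simulate each rule of the weak linear dominance system in ER. For implicational derivation, transfer, order change, redundance-based strengthening, and deletion the arguments from Section~\ref{sec:lin-dom>er-pls} carry over essentially unchanged: ER simulates CP and extension axioms can be introduced via the extension rule; the restriction $\DD = \emptyset$ if anything simplifies matters. So the real content is simulating the restricted dominance-based strengthening rule.

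For that rule, combining the two CP derivations and pushing them through the ER-simulates-CP translation (as in Section~\ref{sec:lin-dom>er-pls}) yields an ER derivation
\[
\CC \wedge \neg C \vdash \CC_{\restrict \omega} \wedge \bigl[ f(\vec z_{\restrict \omega}) < f(\vec z) \bigr].
\]
The key observation, flagged in the introduction, is that $\omega$ is a substitution on the finite set $V \cup \{0,1\}$ of literals and constants, so viewed as a function its functional digraph is a disjoint union of rho-shapes: trees feeding into cycles. Consequently, for any $i$ given in binary the iterate $\omega^i$ is computable in polynomial time, by tracing each literal along its tail (of length at most $|V|+2$) and then taking $i$ modulo the relevant cycle length. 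This lets us introduce extension variables realizing $\omega^i(\vec z)$ for any such~$i$.

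Based on this, I plan to carry out the dominance step by iterating the basic implication via doubling on the substitution: from an ER derivation implementing the ``$\omega$-step'' one obtains by composition with a substituted copy of itself an ``$\omega^2$-step'', and after $m$ doublings an ``$\omega^{2^m}$-step'' that accumulates an $f$-decrease of $2^m$. Taking $m$ polynomial so that $2^m$ exceeds the range of~$f$ gives the required contradiction, and hence $\CC \vdash C$ in ER by reductio. The final ER refutation is assembled by composing these dominance-step simulations with the easy-rule simulations along the given weak linear dominance refutation.

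The main obstacle will be controlling the $\neg C$ hypotheses on intermediate iterates: a naive doubling of the basic derivation requires $\neg C_{\restrict \omega^k}$ at every step $k<2^m$, a priori exponentially many clauses. The plan to handle this is to exploit exactly the graph-theoretic structure of~$\omega$: once iteration has passed the tail length $|V|+2$ all literals lie on $\omega$-cycles, so in the doubling tree we can absorb the intermediate $\neg C_{\restrict \omega^k}$ conditions by case-splitting on whether one of the orbit-substitutes of $C$ already fires (in which case $C$ itself is derived along the orbit and can be propagated back via the available ER derivations of $\CC \vdash C \vee \CC_{\restrict \omega}$), or $f$ continues strictly descending along the doubled chain until the range is exhausted. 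This is exactly the place where the absence of $\Delta$ makes the argument avoid PLS, since the transformation $\alpha \mapsto \alpha \circ \omega$ is now pure ``shuffling'' described by a polynomial-size graph rather than a polynomial-time computation.
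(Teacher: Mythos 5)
Your plan goes wrong at the point where you conclude ``hence $\CC \vdash C$ in ER by reductio''. The weak dominance rule is satisfiability-preserving, not entailment-preserving; in general $\CC$ does not logically entail $C$, and then no ER derivation $\CC \vdash C$ (equivalently, no ER refutation of $\CC \wedge \neg C$) can exist, because $\CC \wedge \neg C$ is satisfiable. Concretely, take $\Gamma = \{z \vee w,\; \neg z \vee \neg w\}$ and $\CC = \Gamma^*$, let $\omega$ swap $z$ and $w$, let $f$ give the lexicographic order with $z$ most significant, and let $C$ be the lex-leader clause $\neg z \vee w$; the two required CP derivations exist (using $\CC_{\restrict\omega} = \CC$ and implicational completeness of CP), so the rule adds $C$, yet $z{=}1, w{=}0$ satisfies $\CC$ and falsifies $C$. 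A direct rule-by-rule translation, in which the dominance step becomes an ER derivation $\CC \vdash C$, therefore cannot be the shape of the argument. Your second obstacle about the intermediate $\neg C_{\restrict\omega^k}$ hypotheses is also real, and the sketched fix does not obviously close it: the orbit of $C$ under $\omega$ has period the lcm of the relevant cycle lengths, which can be exponential in the number of variables, so the case-split over ``orbit-substitutes of $C$'' is not polynomially bounded, and no derivation from $C_{\restrict\omega^k}$ back to $C$ is available anyway.

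You do identify the essential ingredient — that $\omega^i$ is computable in polynomial time with $i$ given in binary, which is Lemma~\ref{lem:iterate_omega} — but the paper uses it inside a bounded-arithmetic soundness proof rather than in a hand-built ER derivation. By Theorem~\ref{thm:t12-g1} it suffices to prove the CNF-reflection principle for the weak linear dominance system in $S^1_2$. That proof does $\Sigma^b_1$ length induction along the refutation with the invariant that $\CC \cup \DD$ is merely \emph{satisfiable}. Inside the step for the weak dominance rule one argues by reductio: supposing $\CC \cup \{C\}$ is unsatisfiable, every $\alpha$ satisfying $\CC$ also satisfies $\neg C$, and a $\PV$-induction on $i$ up to $m+1$ (with $m$ bounding the range of $f$) shows $\alpha \circ \omega^i \vDash \CC$ and $f(\alpha \circ \omega^i) \le m - i$, which at $i = m+1$ is absurd. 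Lemma~\ref{lem:iterate_omega} is exactly what makes this inductive formula a $\PV$ formula, so ordinary rather than length induction applies, and the passage to a polynomial-size ER refutation is then handled entirely by the translation machinery, not by assembling ER derivations locally per rule.
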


To prove the simulation, we use a lemma saying that $\PV$ knows
that there is a polynomial time function that lets us iterate a substitution
$m$ times, when $m$ is given in \emph{binary}.

\begin{lemma} \label{lem:iterate_omega}
There is a polynomial time function $g(\omega, m)$ which takes as input
a substitution~$\omega$ on variables $z_1, \dots, z_n$  and a number $m$ (coded in binary)
and outputs the substitution~$\omega^m$.
Furthermore this works provably in $\PV$, that is, 
$\PV \vdash g(\omega, m+1) =  g(\omega, m) \circ \omega$.
\end{lemma}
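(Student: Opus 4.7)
The plan is to define $g$ via the standard repeated-squaring algorithm, framed as bounded recursion on notation: $g(\omega, 0) = \mathrm{id}$, $g(\omega, 2m) = g(\omega, m) \circ g(\omega, m)$, and $g(\omega, 2m+1) = g(\omega, m) \circ g(\omega, m) \circ \omega$. Since composition of two substitutions on the fixed alphabet $\Lit \cup \{0,1\}$ is itself a polynomial-time operation on tables of size $O(n)$, and the recursion adds only one or two compositions per bit of $m$, this $g$ runs in time polynomial in the lengths of its inputs and is thus available as a $\PV$ function. The defining equations then appear as axioms of $\PV$, and $\PV$ proves associativity of $\circ$ by a short case analysis on the action at a single literal.

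For the claimed identity, I would not try to prove $P(m) := (g(\omega, m+1) = g(\omega, m) \circ \omega)$ directly. The even case $m = 2k$ reduces cleanly to associativity, but the odd case $m = 2k+1$ requires comparing $g(\omega, 2k+2) = g(\omega, k+1) \circ g(\omega, k+1)$ with $g(\omega, k) \circ g(\omega, k) \circ \omega \circ \omega$, which, after applying the inductive hypothesis to replace $g(\omega, k+1)$ by $g(\omega, k) \circ \omega$, leaves us needing to swap $\omega$ past $g(\omega, k)$. I would therefore first prove the auxiliary commutativity statement $Q(m) := (\omega \circ g(\omega, m) = g(\omega, m) \circ \omega)$ by polynomial induction on $m$. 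Each case of $Q$ is handled by applying $Q(\lfloor m/2 \rfloor)$ twice to push $\omega$ through the two halves of the defining expression for $g(\omega, m)$, using only the defining equations and associativity. Both $P$ and $Q$ are quantifier-free formulas in the language of $\PV$, so the required polynomial inductions are among those available. Once $Q$ is in hand, the inductive step for $P$ in the odd case goes through as above.

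The main obstacle is the binary-carry phenomenon: passing from $m$ to $m+1$ can flip many bits at once, so $g(\omega, m+1)$ and $g(\omega, m) \circ \omega$ can look syntactically very different on the same input despite being semantically equal, and this prevents a clean one-step induction on $m$ that stays inside the recursive structure of $g$. The commutativity lemma $Q$ is exactly the ingredient that absorbs this mismatch, as it lets us rearrange factors of $\omega$ freely inside a composition of $g$-values without tracking how carries propagate. With $Q$ proved, the identity $P$ becomes a near-immediate consequence of the defining equations for $g$.
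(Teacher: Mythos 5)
Your argument is correct, and it takes a genuinely different route from the paper's. The paper does not use repeated squaring at all: it observes that for each literal $z_i$ the forward orbit $z_i, \omega(z_i), \omega^2(z_i), \dots$ lives in a set of size $2n+2$, so by pigeonhole it enters a cycle after a tail of length $k_i \le 2n+1$ with period $\ell_i \le 2n$, both computable by simply running $\omega$ for $2n+2$ steps. It then \emph{defines} $g(\omega,m)(z_i) := \omega^{k_i + ((m-k_i) \bmod \ell_i)}(z_i)$ for $m > 2n+2$, which is a closed-form, recursion-free definition of a $\PV$ function; the $\PV$ proof of $g(\omega, m+1) = g(\omega,m) \circ \omega$ then reduces, for each $z_i$ separately, to elementary modular arithmetic together with the (small, explicitly checkable) cycle relation $\omega^{k_i+\ell_i}(z_i)=\omega^{k_i}(z_i)$. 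Your approach instead defines $g$ by Cobham-style recursion on the notation of $m$, so the nontrivial work shifts from defining $g$ to verifying the functional equation, and the crucial ingredient is the commutation lemma $Q(m)\colon \omega \circ g(\omega,m) = g(\omega,m) \circ \omega$, proved by a parallel PIND. Both are sound; the paper's version gives an explicit pointwise formula for $\omega^m$ (which is also what it quietly uses later when it reasons directly about iterates $\alpha_i = \alpha \circ \omega^i$), while yours is more algebraic and self-contained, needing only associativity and the defining equations rather than any pigeonhole or modular-arithmetic reasoning. Your diagnosis of the obstacle — the binary-carry mismatch between $g(\omega,m+1)$ and $g(\omega,m)\circ\omega$ — is exactly right, and $Q$ is the right fix; one should just note explicitly that the even case of $P$ needs no induction hypothesis, so the whole $P$-induction is a legitimate PIND.
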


\begin{proof}
Fix a variable $z_i$ and consider the sequence
$z_i, \omega(z_i), \dots, \omega^{2n+2}(z_i)$ as a
walk through the space $\Lit \cup \{ 0,1\}$.
The sequence can be produced by a $\PV$ function on input $\ang{\omega,i}$.
By the pigeonhole principle, which is available in $\PV$ here since $n$ is 
small (polynomial in the length of the input),
this sequence must touch some point twice.
That is, it consists of a walk of length $k_i$ to some $u \in \Lit \cup \{0,1\}$,
followed by a loop of some size $\ell_i$, where $0 \le k_i \le 2n + 1$ and
$1 \le \ell_i \le 2n$. Again, the numbers $k_i, \ell_i$ can be computed
by a $\PV$ function on input~$\ang{\omega,i}$

Thus to compute $\omega^m(z_i)$ for $m>2n+2$ it is enough to calculate
the remainder of $m-k$ divided by $\ell$: namely, 
$\omega^m(z_i) = \omega^{k_i + ((m - k_i) \bmod \ell_i)}(z_i)$. 
\end{proof}

\begin{proof}[Proof of Proposition \ref{pro:ER_simulation}]

By Theorem \ref{thm:t12-g1}, 
it is enough to show that the soundness of 
the weak linear dominance system is provable in $S^1_2$. 
So, working in $S^1_2$, suppose that a CNF $\Gamma$ is satisfiable 
but that $\Gamma^*$ has a 
refutation $\pi$ in the system. We will derive a contradiction. We will use length induction, 
but with a weaker inductive hypothesis than was used in Section~\ref{sec:soundness}
for the soundness of full linear dominance.
Namely, we will show that for each configuration
$(\CC, \DD, \OOl, \vec z)$ in turn in $\pi$, $\CC \cup \DD$ is satisfiable.
Satisfiability is a $\Sigma^b_1$ property, so this is a form of length
induction we can carry out in $S^1_2$.
It yields a contradiction when we get to the last configuration in $\pi$.

The first configuration is satisfiable, by the assumption on $\Gamma$.
It is easy to see that every rule, other than dominance-based
strengthening, preserves satisfiability; in the case of the redundance-based
strengthening rule, this is by the standard argument about composing the
current assignment once with $\omega$, if necessary.

So suppose we are dealing with the weak dominance-based strengthening rule.
We have an assignment $\alpha$ which satisfies the current configuration
 $(\CC, \emptyset, \OOl, \vec z)$, and we want to satisfy
$(\CC \cup \{ C \},  \emptyset, \OOl, \vec z)$.
We have a substitution~$\omega$ and derivations
\begin{align*}
 \CC \cup \{ \neg C \}
& \vCP
\CC_{\restrict \omega} \cup f(\vec z_{\restrict \omega}) \le f(\vec z) \\
\CC \cup \{ \neg C \} \cup \{ f(\vec z) \le f(\vec z_{\restrict \omega})\} & \vCP \bot.
\end{align*}
for a linear $f$. Suppose for a contradiction 
that $\CC \cup \{ \neg C \}$ is unsatisfiable. Then, since CP derivations are provably
sound (even in $\PV$) we know that for any assignment $\beta$,
if $\beta \vDash \CC$, then $\beta \circ \omega \vDash \CC$ and $f(\beta \circ \omega)<f(\beta)$.

We may assume without loss of generality that $f$ only takes values between~$0$ and
some upper bound~$m$. Writing $\alpha_i$ for $\alpha \circ \omega^i$, we
use induction (rather than length induction) on~$i$  to show that for all~$i$ we have
\[
\alpha_i \vDash \CC \ \textrm{~and~} \ f(\alpha_i) \le m-i.
\]
By Lemma \ref{lem:iterate_omega}, this is a 
$\PV$ formula, 
so this induction can be carried out
in~$S^1_2$ (if the formula were $\Sigma^b_1$, we would only be able to use length induction).
The base case $i=0$ is true by the assumptions about $\alpha$ and $f$, and the 
inductive step follows from the discussion in the previous paragraphs. We conclude that
$f(\alpha_{m+1}) \le -1$, which is impossible.\end{proof}

\subsection{Symmetry breaking in $\ER$} \label{sec:ER_breaking}

Let us  define a proof system $Q$, which we could  call
$\ER$ plus \emph{static symmetry breaking}. A refutation of a CNF $\Gamma$ in $Q$
consists of an initial step, in which we list a sequence of symmetries 
$\omega_1, \dots, \omega_k$ of $\Gamma$ and write down the corresponding 
lex-leader constraints (where for each constraint we use fresh auxiliary variables). 
This is followed by an $\ER$ refutation of $\Gamma$ augmented by these constraints,
that is, of $\Gamma' := \Gamma \wedge \bigwedge_i [\vec z \lelex \vec z_{\restrict \omega_i}]$.

For $k \in \NN$ we define $Q_k$ to be $Q$ limited to only adding axioms for $k$~symmetries.

\begin{prop}
The full system $Q$ is sound, and is simulated by $G_1$.
\end{prop}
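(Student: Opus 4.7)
The plan is to handle soundness and the $G_1$-simulation separately, with both parts driven by the same idea: a lexicographically least satisfying assignment automatically satisfies every lex-leader constraint obtained from a symmetry. For soundness, let $\Gamma$ be satisfiable and let $\omega_1,\dots,\omega_k$ be symmetries of $\Gamma$. Take $\alpha$ to be a lex-minimal total assignment with $\alpha\vDash\Gamma$. For each $i$, since $\omega_i$ is a symmetry we have $\alpha\vDash\Gamma_{\restrict \omega_i}$ and hence $\alpha\circ\omega_i\vDash\Gamma$; lex-minimality of $\alpha$ then gives $\alpha\lelex\alpha\circ\omega_i$, which by the design of the ordering CNF (see Appendix~\ref{sec:ordering}) lets us extend $\alpha$ on the auxiliary variables of $[\vec z\lelex\vec z_{\restrict \omega_i}]$ so as to satisfy that constraint. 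Doing this for each $i$ in turn, using fresh auxiliaries each time, yields a satisfying assignment for $\Gamma'$, so no $\ER$-refutation of $\Gamma'$ can exist.

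For the simulation by $G_1$, I would apply Theorem~\ref{thm:t12-g1} and show that $T^1_2$ proves the CNF-reflection principle for $Q$. The statement to formalize is: for every CNF $\Gamma$, every sequence of substitutions $\omega_1,\dots,\omega_k$ with $\Gamma_{\restrict \omega_i}=\Gamma$, and every $\ER$-refutation $\pi$ of $\Gamma\wedge\bigwedge_i[\vec z\lelex\vec z_{\restrict \omega_i}]$, the CNF $\Gamma$ is unsatisfiable. Reasoning in $T^1_2$, suppose $\Gamma$ has a satisfying assignment. The least number principle for $\PV$ formulas, which is available in $T^1_2$, produces a lex-minimal total assignment $\alpha\vDash\Gamma$. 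The argument of the previous paragraph is entirely polynomial-time checkable once $\alpha$ is in hand, so it formalizes inside $T^1_2$ and constructs an assignment satisfying~$\Gamma'$. Combined with the $\PV$-provable soundness of $\ER$ from Theorem~\ref{the:provable_G1_soundness}, this contradicts the existence of $\pi$.

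The main thing to be careful about is the handling of the auxiliary variables in the lex-leader CNFs. Concretely, one wants a $\PV$ function that, given~$\alpha$ and $\omega_i$, outputs an assignment to the auxiliaries of $[\vec z\lelex\vec z_{\restrict \omega_i}]$ which together with $\alpha$ satisfies that CNF whenever $\alpha\lelex\alpha\circ\omega_i$; this is the content of the properties of $[\vec x\lelex\vec y]$ established in the appendix, and is the only nontrivial ingredient beyond LNP. I expect no real obstacle here, since the ordering CNF is designed exactly so that feasibility of the auxiliaries is witnessed in polynomial time from a true lex-inequality on $\vec x,\vec y$.
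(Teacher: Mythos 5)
Your proposal is correct and takes essentially the same approach as the paper: soundness via a lexicographically minimal satisfying assignment (with a small bookkeeping observation about disjoint auxiliary variables in the lex-leader CNFs), and the $G_1$-simulation by formalizing that soundness argument as the CNF-reflection principle for $Q$ in $T^1_2$ (using the least number principle and the $\PV$-provable soundness of $\ER$) and then invoking Theorem~\ref{thm:t12-g1}.
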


\begin{proof}
We repeat the proof of Proposition~\ref{pro:intro_symmetry},
except this time we fill in some details.
To prove soundness, it is enough to show that,
supposing $\Gamma$ is satisfiable, 
$\Gamma'$ is satisfiable as well. 
Let~$\alpha$ be a lexicographically
 minimal assignment to the $\vec z$-variables satisfying $\Gamma$. 
We claim that an extension of $\alpha$ satisfies~$\Gamma'$.
To see this, let $\omega_i$ be any symmetry from our list.
Then $\alpha \vDash \Gamma$ implies $\alpha \vDash \Gamma_{\restrict \omega_i}$,
and thus $\alpha \circ \omega_i \vDash \Gamma$.
By minimality of $\alpha$ we have $\alpha \lelex \alpha \circ \omega_i$,
and thus, extending $\alpha$ to $\beta$ which satisfies the extension axioms
in the definition of $\lelex$, we have that $\beta$ satisfies the symmetry-breaking axiom 
$[\vec z \lelex \vec z_{\restrict \omega_i}]$. In this way we can simultaneously
satisfy such axioms for all $i$, by the assumption that auxiliary variables are disjoint.

For the simulation by $G_1$, it is enough to observe that this argument
can be cast as a proof of the CNF-reflection principle for $Q$ and carried out in $T^1_2$.
Then we can
appeal to Theorem~\ref{thm:t12-g1}.
\end{proof}

The converse direction is presumably false:

\begin{prop}
$G_1$ is not simulated by $Q$, assuming $G_1$ is not simulated by $\ER$.
\end{prop}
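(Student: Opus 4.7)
The plan is to prove the contrapositive: assuming $G_1$ is simulated by $Q$, I would derive that $G_1$ is simulated by $\ER$. The key observation is that if a CNF $\Gamma$ is \emph{rigid}, in the sense that every symmetry of $\Gamma$ fixes each of its variables pointwise, then any $Q$-refutation of $\Gamma$ can use only substitutions $\omega_i$ with $\vec z_{\restrict \omega_i} = \vec z$, so each lex-leader axiom $[\vec z \lelex \vec z_{\restrict \omega_i}]$ reduces to $[\vec z \lelex \vec z]$, which is trivially derivable in $\ER$ (using extension variables to witness the ordering). Thus for rigid CNFs, $Q$-refutations can be converted into $\ER$-refutations in polynomial time.

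Let $f$ be a polynomial-time simulation witnessing $G_1 \le Q$. To construct a simulation $g \colon G_1 \le \ER$, given a CNF $\Gamma$ with a $G_1$-refutation $\pi$, I would first build a rigidified version $\Gamma^\star := \Gamma \wedge \Delta$, where $\Delta$ uses only fresh auxiliary variables and is designed to ensure that $\Gamma^\star$ is rigid. Since $\Gamma^\star \supseteq \Gamma$, the refutation $\pi$ of $\Gamma$ gives a refutation $\pi^\star$ of $\Gamma^\star$ by weakening of the antecedent. Applying $f$ to $(\Gamma^\star, \pi^\star)$ yields a $Q$-refutation in which, by rigidity, all listed $\omega_i$ act trivially on the variables of $\Gamma^\star$, so it is essentially an $\ER$-refutation of $\Gamma^\star$. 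Finally, restrict this $\ER$-refutation by a canonical satisfying assignment of $\Delta$ on its fresh variables, which (after a renaming arranging these to be disjoint from the extension variables introduced by $f$) yields an $\ER$-refutation of $\Gamma$.

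The main obstacle is the construction of $\Delta$. A reasonable approach is to add, for each variable $v$ of $\Gamma^\star$, a short sequence of tautologous tag clauses whose lengths and literal sets single out $v$ up to polarity, together with polarity-breaking clauses such as $v \vee b_v$ with $b_v$ a fresh variable forced to $1$ by its own unit clause. The tag and chain structure must also rigidify the newly added fresh variables themselves, so that no non-trivial symmetry can permute them either. Verifying that such a $\Delta$ can be produced in polynomial size, has a polynomial-time satisfying assignment on its fresh variables, and truly leaves $\Gamma^\star$ rigid is the main technical work; the remaining steps of the simulation are routine lifting in $G_1$ and restriction in $\ER$.
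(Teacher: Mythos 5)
Your proposal follows the same strategy as the paper, just phrased as the contrapositive: both exploit that for a CNF with no nontrivial symmetries, a $Q$-refutation is essentially an $\ER$-refutation (the lex-leader axioms collapse to $[\vec z \lelex \vec z]$, which is easily derivable), and that restricting an $\ER$-refutation of a rigidified extension by a satisfying assignment of the added clauses recovers an $\ER$-refutation of the original. The paper argues directly from a hypothetical family $\Gamma_n$ that is hard for $\ER$ but easy for $G_1$, while you build a polynomial-time simulation from an assumed simulation $G_1 \le Q$; these are interchangeable framings of the same idea, and the weakening-of-antecedent and restriction-of-refutation steps in your outline match the paper's.

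The genuine gap is the rigidifying gadget $\Delta$, which you yourself flag as ``the main technical work'' and do not construct. The paper dispatches this in one line: with $\Gamma_n$ on variables $x_1,\dots,x_m$, take fresh variables $y_1,\dots,y_m$ and add the clauses $x_i \vee y_1 \vee \dots \vee y_i$ for $i = 1,\dots,m$. The paper observes this makes $\Gamma_n \cup A_n$ have no nontrivial symmetries, while the restriction $y_i := 1$ for all $i$ satisfies every added clause and leaves $\Gamma_n$ untouched, so the $\ER$-refutation restricts as required. Your sketched gadget is both underspecified and built on a shaky premise: a ``tautologous tag clause'' is semantically vacuous, and since any substitution respecting negations sends a tautologous clause to a tautologous clause of the same length, such tags can at best fix a variable up to polarity and up to permutation of like-shaped tags, so you cannot conclude rigidity from them alone. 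The extra ``polarity-breaking'' unit and binary clauses you then need must be checked not to interfere with the downstream restriction or to introduce new accidental symmetries on the fresh variables themselves -- exactly the kind of bookkeeping the paper's short non-tautologous gadget avoids.
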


\begin{proof}
Let $\Gamma_n$ be a family of CNFs which have polynomial-sized refutations
in $G_1$ but require superpolynomial size in $\ER$. 
Then it is easy to construct a polynomial-sized CNF $A_n$ such that $\Gamma_n \cup A_n$ 
has no symmetries; 
assuming $\Gamma_n$ has variables $x_1, \dots, x_m$,
a convenient example consists of clauses $x_i \vee y_1 \vee \dots \vee y_i$
for  each~$i$, where $y_1, \dots, y_m$ are new variables.
Then $G_1$ refutations of $\Gamma_n$ still work for  
$\Gamma_n \cup A_n$ (we may need to add one more weakening step).
On the other hand, if $\pi$ is any $Q$ refutation of $\Gamma_n \cup A_n$, 
then it must be just an $\ER$ refutation, and we can turn it into an $\ER$
refutation of $\Gamma_n$ by applying the restriction which sets every
$y_i$ variable to $1$. Thus $\pi$ must have superpolynomial size.
\end{proof}

We can now prove Theorem~\ref{the:intro_Q1} from the introduction,
that $Q_1$ is simulated by $\ER$.

\begin{proof}[Proof of Theorem~\ref{the:intro_Q1}]
We will show that the soundness of $Q_1$ is provable in $S^1_2$. The result then 
follows by Theorem~\ref{thm:t12-g1}. 
Let $\Gamma$ be a CNF and let $\omega$ be a symmetry of $\Gamma$. 
Let $\Gamma' := \Gamma \wedge [\vec z \lelex \vec z_{\restrict \omega}]$
and suppose we are given an $\ER$ refutation of $\Gamma'$.
We will show, with a proof formalizable
in $S^1_2$, that if $\Gamma$ is satisfiable then so is 
$\Gamma'$. We can then derive a contradiction, since
$S^1_2$ proves the soundness of $\ER$ (see Theorem~\ref{the:provable_G1_soundness}).

Working in $S^1_2$,
suppose $\alpha \vDash \Gamma$. 
As in the proof of Proposition~\ref{pro:ER_simulation}
we write $\alpha_i$ for $\alpha \circ \omega^i$, and use 
the fact that by Lemma \ref{lem:iterate_omega} this can be computed in polynomial time.
Suppose for a contradiction that $\Gamma \wedge [\vec z \lelex \vec z_{\restrict \omega}]$
is unsatisfiable. It follows that for any assignment $\beta$, if $\beta \vDash \Gamma$
then $\beta \circ \omega \llex \beta$. On the other hand, if $\beta \vDash \Gamma$ then 
we already know $\beta \circ \omega \vDash \Gamma$, since $\Gamma_{\restrict \omega} = \Gamma$.
Assuming that there are $n$ many $z$-variables we have $\alpha \lelex 2^n-1$,
where we identify $2^n-1$ with a string of $1$s of length~$n$. 
Thus we can reach a contradiction by a similar induction as in 
the proof of Proposition~\ref{pro:ER_simulation}, showing 
inductively that for each~$i$ we have $\alpha_i \vDash \Gamma$ and $\alpha_i \lelex 2^n-i$.
\end{proof}

This proof breaks down immediately even for $Q_2$, since we do not have any
equivalent of Lemma~\ref{lem:iterate_omega}
for arbitrary compositions of two substitutions.

We briefly discuss how one could directly construct an ER refutation from a $Q_1$
refutation, without going through bounded arithmetic and Theorem~\ref{the:intro_Q1}.
The main task is to construct a circuit $C$ which, when given an assignment $\alpha$
such that $\alpha \vDash \Gamma$, outputs an assignment $\beta$ such
that $\beta \vDash \Gamma \wedge [\vec z \lelex \vec z_{\restrict \omega}]$.
Furthermore this property of $C$ must be provable in ER, in the sense that we have an ER derivation
$\Gamma (\vec x) \wedge [\vec z = C(\vec x)] \vdash \Gamma(\vec z) \wedge [\vec z \lelex \vec z_{\restrict \omega}]$
(where we are suppressing auxiliary variables in $C$ and $\lelex$). We will just describe~$C$.

We use a subcircuit which takes input $\vec z, i$ and computes $\alpha_i := \vec z_{\restrict \omega^i}$
using the algorithm for~$g$ in Lemma~\ref{lem:iterate_omega}.
The circuit $C$  finds~$i$ such that the two conditions
$\alpha_i \vDash \Gamma$  and
$\alpha_i \lelex 2^n - i$
hold for $i$, but one of them fails for~$i+1$,
and outputs $\alpha_i$. Such an~$i$ can be found by binary
search, since both conditions hold for $i=0$ and the second one 
must fail for~$i=2^n+1$.
Since $\alpha_i \vDash \Gamma$ and $\alpha_{i+1} = \alpha_i \circ \omega$,
we have that $\alpha_{i+1} \vDash \Gamma$ as $\omega$ is a symmetry.
We conclude that the second condition fails and $\alpha_{i+1} >_\mathrm{lex} 2^n-i-1$. 
Thus $\alpha_{i+1} \ge_\mathrm{lex} \alpha_i$, meaning that $\alpha_i \lelex \alpha_i \circ \omega$
as required.

\paragraph{Acknowledgements.}
We are grateful to Jakob Nordstr\"{o}m
for introducing us to this topic and answering our questions about it, 
and to Sam Buss and Vijay Ganesh for other helpful discussions.

\bibliography{dominance_aib}

\appendix


\section{Postponed technical material} \label{sec:technical_appendix}


\subsection{How to formalize ordering (from Section~\ref{sec:ERpls})} 
\label{sec:ordering}

We 
define CNFs $[ \vec x \llex \vec y]$ and $[ \vec x \lelex \vec y]$ expressing the lexicographic ordering.
We will need these to be compatible with how we reason about
orderings in the pseudo-Boolean setting.

Let $r$ be the arity of $\vec x$ and $\vec y$.
For our proof system we will want to compare 
 $\vec x$ and $\vec y$  with the 
most significant bits first, but just in this section we prefer to define a CNF for reverse lexicographic
order, since it gives more readable notation when we spell out the details.

We work with variables $x_1, \dots, x_r$, $y_1, \dots, y_r$, $z_1, \dots, z_{r+1}$
and $c_2, \dots , c_{r+1}$, representing numbers $x ,y< 2^r$, a number $z<2^{r+1}$ and a string
of ``carry'' bits~$c$.
We first write a CNF~$\Delta$ expressing ``$x+z = y + 2^r$''. 
This is the conjunction of, 
 for each~$i=1, \dots, r+1$, a CNF expressing
\begin{equation}\label{eqn:bit-computation}
x_i+z_i+c_i =y_i + 2c_{i+1}
\end{equation}
where at the ``boundaries'' we replace $x_{r+1}$, $c_1$ and $c_{r+2}$ with the constant~$0$ and
replace $y_{r+1}$ with the constant~$1$; this last step corresponds to adding $2^r$
to the number $y$. 
Each expression of the form (\ref{eqn:bit-computation}) can be written as a CNF.
Moreover, since the number $z$ is uniquely defined, 
we can introduce extension variables needed for intermediate steps in the verification of (\ref{eqn:bit-computation})
in such a way
that the whole formula $\Delta$ becomes a set of
extension axioms defining $\vec z$, $\vec c$ and the intermediate extension variables
from $\vec x$ and $\vec y$.


We  write a CNF $P_<$ expressing that $\vec x < \vec y$
by saying that ``$x+z=y+2^r$ with $z > 2^r$'',
that is, $\Delta \wedge z_{r+1} \wedge \bigvee_{i=1}^r z_i$.

\medskip

Now let  $L_<$ be the natural way to expressing the strict lexicographic ordering
 by a single PB constraint. That is, $L_<$ is the constraint
$\sum_{i=1}^r 2^{i-1} x_i < \sum_{i=1}^r 2^{i-1} y_i$.
The definition of $P_<$ was chosen to give 
a simple proof of the following lemma.

\begin{lemma} \label{lem:ordering_derivation}
There is a polynomial-time procedure that on input $1^r$ produces
CP derivations $P_<(\vec x, \vec y)^* \vCP L_<(\vec x, \vec y)$.
\end{lemma}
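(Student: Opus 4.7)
The plan is to unpack the CNF $P_<$ into the linear-integer arithmetic it is designed to encode and then repackage that arithmetic as the single inequality $L_<$. Writing $X := \sum_{i=1}^r 2^{i-1} x_i$, $Y := \sum_{i=1}^r 2^{i-1} y_i$ and $Z := \sum_{i=1}^{r+1} 2^{i-1} z_i$, the formula $P_<$ semantically asserts $X + Z = Y + 2^r$ together with $Z \geq 2^r + 1$, from which $Y \geq X + 1$ follows immediately. All the real work is to mirror this sum-and-substitute argument in $\CP$.

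First I would derive from $\Delta^*$, for each $i = 1, \ldots, r+1$, the single PB inequality
\[
y_i + 2 c_{i+1} \;\geq\; x_i + z_i + c_i,
\]
using the boundary values $x_{r+1} = c_1 = c_{r+2} = 0$ and $y_{r+1} = 1$. Since $\Delta$ is presented as a set of extension axioms implementing a fixed full-adder gadget through two-input $\land$-gates, each such gate contributes the standard three PB constraints $y \geq u+v-1$, $u \geq y$, $v \geq y$, and the bit inequality above is obtained by a short, $i$-independent integer combination of those constraints. This step is the only one that depends on the concrete choice of CNF template for \mbox{(\ref{eqn:bit-computation})}.

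Next I would multiply the $i$-th such inequality by $2^{i-1}$ and sum over $i=1,\ldots,r+1$. The carry contributions $\sum_i 2^{i-1} c_i$ on one side and $\sum_i 2^{i} c_{i+1}$ on the other cancel arithmetically after substituting $c_1 = c_{r+2} = 0$, while the boundary $y_{r+1} = 1$ supplies the additive constant $2^r$; the telescoped result is $Y + 2^r \geq X + Z$. I then translate the two remaining clauses of $P_<$: the unit $z_{r+1}$ multiplied by $2^r$ gives $2^r z_{r+1} \geq 2^r$, and the clause $z_1 \vee \cdots \vee z_r$ translates to $z_1 + \cdots + z_r \geq 1$, which, by adding non-negative multiples of the Boolean axioms $z_i \geq 0$, scales up in $\CP$ to $\sum_{i=1}^r 2^{i-1} z_i \geq 1$. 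Together these give $Z \geq 2^r + 1$, and substituting this into the telescoped inequality yields $Y \geq X + 1$, i.e.\ $L_<(\vec x, \vec y)$.

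The main obstacle is the bit-level decoding in the first step: it is the only place where the shape of the CNF matters, and one must verify that a fixed gadget admits the promised short, uniform derivation. Fixing a concrete full adder up front and doing a finite case analysis on the PB translations of its extension axioms is enough; once this template is in place, the remaining summation and substitution are routine, and the total size of the derivation is polynomial in $r$.
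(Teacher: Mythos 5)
Your proposal is correct and follows essentially the same approach as the paper: extract bit-level arithmetic inequalities from $\Delta^*$ (the paper invokes implicational completeness of CP rather than a fixed full-adder template, which sidesteps your ``main obstacle''), take a linear combination weighted by $2^{i-1}$ so that the carry terms telescope, and combine with the translated constraints $z_{r+1}\ge 1$ and $z_1+\dots+z_r\ge 1$ to obtain $L_<$. The only cosmetic difference is that the paper phrases the weighted summation as an explicit induction on partial sums $X_j, Y_j, Z_j$, while you perform the summation in one pass; both describe the same polynomial-size CP derivation.
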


\begin{proof}
We will write $X_j, Y_j, Z_j$ for the sums
$\sum_{i=1}^j 2^{i-1} x_i$ etc. We will inductively construct
derivations
$
\Delta^* \vCP X_j + Z_j = Y_j + 2^{j} c_{j+1},
$
for $j =  1, \dots, r$, where $\Delta$ is the CNF defined above. 
Here and below the equality $=$ is
shorthand for the pair of inequalities~$\le$ and~$\ge$.

For each such $j$, the CNF $\Delta$ contains clauses together
expressing $x_i+z_i+c_i =y_i + 2c_{i+1}$,
after the substitution $c_1 \rightarrow 0$.
Thus, by the implicational completeness
of cutting planes~\cite{chvatal:edmonds-polytopes}, we can produce constant-size derivations
of this equality (as a pair of PB constraints)
from $\Delta^*$.
For $i=1$ this immediately gives the base case of the induction.
Now, in the inductive step, suppose $1 \le j < r$
and we have  $X_j + Z_j = Y_j + 2^{j} c_{j+1}$.
Then we add $x_{j+1} + z_{j+1} = y_{j+1} + 2 c_{j+2} - c_{j+1}$ multiplied by $2^j$ 
to obtain $X_{j+1} + Z_{j+1} = Y_{j+1} + 2^{j+1} c_{j+2}$.

When the inductive construction is completed, 
we have derived $X_r + Z_r = Y_r + 2^{r} c_{r+1}$.
The final constraints in $\Delta^*$ can be used to produce a
constant-size derivation of the equality $z_{r+1} + c_{r+1} = 1$.
So we can add $2^r z_{r+1} = 2^r - 2^r c_{r+1}$ to get
$X_r + Z_{r+1} = Y_r + 2^r$.
From the assumptions $z_{r+1}=1$ and $\bigvee_{i=1}^r z_r$
in $P_<$ we can derive $Z_{r+1} > 2^r$, and together these give us $L_<$.
\end{proof}

We define $P_\le(\vec x, \vec y)$ as the negation of $P_<(\vec y, \vec x)$.
Precisely, if $P_<(\vec y, \vec x)$ has the form $\Delta \wedge z_{r+1} \wedge \bigvee_{i=1}^r z_i$
we take $P_\le(\vec x, \vec y)$ to be the CNF $\Delta \wedge [ w \leftrightarrow (z_{r+1} \wedge \bigvee_{i=1}^r z_i) ]
\wedge \neg w$, where the expression $[ w \leftrightarrow \dots ]$ is shorthand for a series of extension axioms
defining a new variable~$w$.
Finally we set the formula
$[\vec x \lelex \vec y]$ to be $P_\le(x_r, \dots, x_1, y_r, \dots , y_1)$ 
and set
$[\vec x <_\mathrm{lex} \vec y]$ to be $P_<(x_r, \dots, x_1, y_r, \dots , y_1)$.
The next proposition is proved in a similar way to the proof of Proposition~\ref{pro:PV_to_ER}
below.

\begin{prop} \label{pro:ER_moving_ordering}
Given an $\ER$ derivation $\Gamma \wedge [ \vec x \lelex \vec y ] \vdash \bot$
we can construct in polynomial time an $\ER$ derivation
$\Gamma \vdash [ \vec y <_\mathrm{lex} \vec x ]$.
\end{prop}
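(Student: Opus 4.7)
My plan is to exploit the fact that $[\vec x \lelex \vec y]$ and $[\vec y <_\mathrm{lex} \vec x]$ share a common set of extension axioms $\Delta$ (those defining the carry bits $\vec c$ and sum bits $\vec z$ from the ``$x+z=y+2^r$'' computation, after the cosmetic bit-reversal), and differ only in that $[\vec x \lelex \vec y]$ additionally introduces extension axioms defining a new variable $w \leftrightarrow (z_{r+1} \wedge \bigvee_i z_i)$ and asserts the unit clause $\neg w$, whereas $[\vec y <_\mathrm{lex} \vec x]$ directly asserts $z_{r+1}$ and $\bigvee_i z_i$. So the task reduces to converting a refutation that relies on the unit $\neg w$ into a derivation that instead produces the two positive clauses $z_{r+1}$ and $\bigvee_i z_i$.

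First, I would flatten the given ER derivation $\pi$ by moving every application of the extension rule to the beginning, exactly as in Section~\ref{sec:lin-dom>er-pls}. This yields, in polynomial time, a pure resolution refutation $\pi'$ of $\Gamma \cup \Delta \cup [w \leftrightarrow (z_{r+1} \wedge \bigvee_i z_i)] \cup \{\neg w\} \cup E$, where $E$ collects the extension axioms that were introduced internally in $\pi$. Second, because $\neg w$ is a single literal, I would apply the standard ``reverse deduction'' transformation to $\pi'$: walk through it and, for every clause $C$ whose derivation depends on the axiom $\neg w$, derive $C \vee w$ in place of $C$; whenever the original step resolved some clause $D \vee w$ against $\neg w$ to produce $D$, the new derivation already has $D \vee w$ and simply skips that step, while resolution between two adjusted clauses along a non-$w$ pivot still goes through. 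The empty clause becomes the unit clause $w$, and the upshot is a polynomial-time-constructible resolution derivation $\pi''$ of $w$ from $\Gamma \cup \Delta \cup [w \leftrightarrow \dots] \cup E$.

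To assemble the final ER derivation from $\Gamma$, I would introduce via the extension rule (using fresh variable names to avoid clashes) the axioms $\Delta$, the axioms defining $w$ and its auxiliaries, and the axioms in $E$; this puts the clauses of $\Delta$ into the output as the conclusion requires, and supplies all the hypotheses of $\pi''$. I would then run $\pi''$ to conclude the unit $w$, and finally resolve $w$ against the clauses of $[w \leftrightarrow (z_{r+1} \wedge \bigvee_i z_i)]$ to extract both $z_{r+1}$ and $\bigvee_i z_i$; together with $\Delta$, these are precisely the clauses of $[\vec y <_\mathrm{lex} \vec x]$. The main obstacle is purely bookkeeping: checking that the reverse-deduction transformation respects the restrictions on extension variables and on weakening in Definition~\ref{def:ER_derivation}, and that the order in which extension axioms are introduced is consistent across the merged derivation. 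Both concerns dissolve once all extension axioms are introduced up front with fresh names before $\pi''$ is run.
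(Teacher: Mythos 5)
Your proof is correct and takes essentially the same approach as the paper's, which says only that the proposition ``is proved in a similar way to the proof of Proposition~\ref{pro:PV_to_ER}'': both arguments rest on observing that $[\vec x \lelex \vec y]$ and $[\vec y \llex \vec x]$ share the extension axioms $\Delta$ and differ only in a negated unit $\neg w$ versus the positive assertions it negates, and both then apply the reverse-deduction transformation (deriving $C \vee w$ for each clause $C$) to turn the refutation into a derivation of $w$, from which the desired clauses are extracted. The only cosmetic difference is that you flatten the ER derivation up front, whereas the paper's template handles the extension steps inline during the transformation.
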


\subsection{Propositional translations (from Section~\ref{subsec:prop-transl})} 
\label{appendix_translations}

We give the proof of Proposition~\ref{pro:PV_to_ER}.
That is, suppose $\PV$ proves 
a sentence 
\begin{equation}\label{eqn:provable-formula}
\forall \vec x, \, \phi_1(\vec x) \wedge \dots \wedge \phi_r(\vec x)
\rightarrow \theta(\vec x),
\end{equation}
where $\phi_1, \dots, \phi_r, \theta$ are quantifier-free. 
Then for any assignment $\vec k$ of bit-lengths to the variables $\vec x$,
we can construct in time polynomial in $\vec k$
an $\ER$ derivation 
\[
\norm{\phi_1(\vec x)}_{\vec k} \wedge \dots \wedge \norm{\phi_r(\vec x)}_{\vec k}
\vdash \norm{ \theta(\vec x)}_{\vec k}.
\]

\begin{proof}[Proof of Proposition~\ref{pro:PV_to_ER}]
By the usual form of the translation from $\PV$ into $\ER$,
as presented for example in~\cite[Theorem 12.4.2]{krajicek:proof-complexity},
it follows from the provability of (\ref{eqn:provable-formula})
that in time polynomial in $\vec k$ we can build 
an $\ER$ \emph{refutation} 
\[
\pi :
\norm{\phi_1(\vec x)}_{\vec k} \wedge \dots \wedge \norm{\phi_r(\vec x)}_{\vec k} \wedge \norm{\neg \theta(\vec x)}_{\vec k} \vdash \bot.
\]
The details of how to transform $\pi$ into an $\ER$ derivation 
$\pi' : \Gamma \vdash \norm{ \theta(\vec x)}_{\vec k}$,
where we write $\Gamma$ for $\norm{\phi_1(\vec x)}_{\vec k} \wedge \dots \wedge \norm{\phi_r(\vec x)}_{\vec k}$, 
are routine but messy. 
The translation 
 $\norm{\theta(\vec x)}_{\vec k}$ 
has the form $\Delta \wedge z^\out$ where
$\Delta$ is a set of extension axioms over $\vec x_1, \dots, \vec x_\ell ; \vec z$ for some variables~$\vec z$
(which include~$z^\out$).
Modulo trivial
modifications, $\norm{\neg\theta(\vec x)}_{\vec k}$ can  be taken to be $\Delta \wedge \neg z^\out$.
Note that by our conventions on variables, the variables $\vec z$ do not appear in~$\Gamma$.
Thus we can begin $\pi'$ by deriving all 
of $\Delta$ from $\Gamma$ using the extension rule, and all that remains
is to derive $z^\out$.

To so this,
 we work through the refutation $\pi : \Gamma \wedge \Delta \wedge \neg z^\out \vdash \bot$
and, for each clause~$C$ in $\pi$ except for the initial clause $\neg z^\out$,
we derive $C \vee z^\out$. 
If $C$ is a clause of $\Gamma$ or $\Delta$, we derive $C \vee z^\out$ from $C$ by weakening
(notice that $z^\out$ is at this point an ``old'' variable, since it appears in $\Delta$);
if $C$ is derived in $\pi$ by resolution on a variable other than $z^\out$
or by weakening, we derive $C \vee z^\out$ by the same inference; 
if $C$ is derived in $\pi$ from $B \vee z^\out$ and $A \vee \neg z^\out$ by resolution on~$z^\out$, we
instead derive $C \vee z^\out$ by weakening from $B \vee z^\out$ (again, this does not introduce any new variables).
If $C$ is part of an extension rule  in $\pi$, we introduce the same extension axioms and then
derive $C \vee z^\out$ by weakening, noting that the same variables are new at this point in $\pi$ and in $\pi'$.
Thus we finally derive~$z^\out$, where in $\pi$ we derived~$\bot$.
\end{proof}

\subsection{How to formalize satisfiability (from Section~\ref{sec:ERpls_simulates_G1})} \label{sec:formalize_sat}

\begin{prop} \label{pro:A_to_phi}
There is a polytime procedure that, 
given a CNF $A$ in $n$ variables, produces an $\ER$
derivation $A(\vec x) \vdash \norm{\Sat(a,x)}_{\restrict \tau}$,
where the bit-lengths of $\vec a, \vec x$ in $\norm{\Sat(a,x)}$ are the size of $A$ and $n$, respectively,
$\tau$ substitutes the bits of $A$ for $\vec a$, 
and $A(\vec x)$ is $A$ with variables renamed to $\vec x$.

\end{prop}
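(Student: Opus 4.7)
The plan is to exploit our freedom in choosing how to define $\Sat$ as a $\PV$ function. I would pick a natural and convenient definition of $\Sat(a,x)$ in which the predicate is computed by iterating over the clauses of $a$: for each clause index $i$ up to the number of clauses encoded in $a$, extract the list of literals of that clause, evaluate the disjunction of these literals using the bits of $x$ looked up by index, and finally take the conjunction of all these per-clause values. With such a definition, the Boolean circuit $C$ underlying $\norm{\Sat(a,x)}_{\vec k}$ has a transparent structure: it is a composition of a ``parser'' subcircuit reading the bits of $\vec a$, a series of per-clause evaluation subcircuits depending on $\vec a$ and $\vec x$, and a final AND-tree.

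Now I would apply the partial assignment $\tau$, which writes the actual bits of $A$ into $\vec a$ and propagates these values through every subcircuit wire whose value depends only on~$\vec a$. The key observation is that after this restriction, the parser subcircuit collapses to constants and, for each clause $C_i$ of $A$, the $i$th per-clause subcircuit reduces to a small circuit computing precisely the Boolean value of $C_i(\vec x)$ in the renamed variables. Consequently, $\norm{\Sat(a,x)}_{\restrict \tau}$ has the form $[\vec z = C'(\vec x)] \wedge z^\out$, where $C'$ contains, as identifiable internal nodes, one output wire $z_i$ per clause of $A$, together with an AND-tree over these wires giving $z^\out$.

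With this structural picture in hand, the ER derivation $A(\vec x) \vdash \norm{\Sat(a,x)}_{\restrict \tau}$ is built as follows. First, introduce all the extension axioms $[\vec z = C'(\vec x)]$ one by one, in topological order of the circuit. Then, for each clause $C_i$ of $A$, use the assumption $C_i(\vec x)$ from $A(\vec x)$, together with the (small, constant per clause) set of extension axioms defining the per-clause subcircuit, to derive the literal~$z_i$ by standard resolution: this is just the propositional completeness of resolution on a constant-sized tautology. Finally, walk up the AND-tree, deriving each internal conjunction wire from its two children via the extension axioms expressing $y \leftrightarrow u \wedge v$, until $z^\out$ is derived.

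The main obstacle is not conceptual but bookkeeping: one must fix a specific PV-level definition of $\Sat$ and of the encoding of CNFs precisely enough that the propositional translation really decomposes under $\tau$ into the clean per-clause pieces described above, and one must verify that each of these pieces admits the small resolution derivation claimed. This is routine once the definitions are pinned down, and the overall construction is clearly polynomial-time uniform in $A$, since all steps above can be carried out by inspecting the clauses of $A$ one at a time.
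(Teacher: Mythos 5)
Your proposal takes essentially the same approach as the paper: pick a convenient $\PV$-level definition of $\Sat$ so that the restricted translation $\norm{\Sat(a,x)}_{\restrict \tau}$ decomposes into per-clause evaluation subcircuits plus an aggregation of their outputs, then build the ER derivation by introducing the extension axioms, resolving each clause of $A$ against its subcircuit's axioms to obtain the corresponding output wire, and finally propagating up the conjunction. The paper realizes the aggregation as a sequential chain $w_j \leftrightarrow w_{j-1} \wedge z^j_{2n}$ rather than an AND-tree, and makes explicit the remark that any two reasonable formalizations of $\Sat$ are $\PV$-provably equivalent (justifying the freedom you invoke at the start), but these are cosmetic differences and your argument is correct.
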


\begin{proof}
We may assume that $A$ is already in the variables $\vec x = x_1,\ldots,x_n$.
The details of the procedure will depend on how exactly $\Sat(a,x)$ is formalized
by a $\PV$ formula, but any two reasonable formalizations will
be provably equivalent in $\PV$, so we may focus on one convenient
formalization and then invoke Proposition \ref{pro:PV_to_ER} and Lemma \ref{lem:compose_ER_derivations}.

Suppose, for example, that $\Sat(a,x)$ is given as a straightforward procedure that considers each clause of $a$ in succession 
and checks for each literal whether it belongs to the clause and whether it is satisfied under $x$; 
afterwards, it checks if at least one literal was satisfied in each clause.
Let $m$ be the number of clauses in $A$. 
Then we may assume that $\norm{\Sat(a,x)}_{\restrict \tau}$ contains the following auxiliary variables:
for each $j = 1, \ldots, m$ and each $i = 1,\ldots, 2n$ (this being the number of literals in $n$ variables),
a variable $z^j_i$ with the intuitive meaning that one of the first $i$ literals appears in the $j$-th clause and happens to be satisfied; and for each $j = 1, \ldots, m$, a variable $w_j$ with the intuitive meaning that each of the first $j$ clauses 
contains a satisfied literal. We may also assume that $\norm{\Sat(a,x)}_{\restrict \tau}$ consists of the following conjuncts:
\begin{itemize}
\item for $j= 1, \ldots, m$ and $i = 1,\ldots, 2n$, a constant-size set of clauses logically equivalent to
either $z^j_{i} \leftrightarrow z^j_{i-1}$, 
if the $i$-th literal does not appear in the $j$-th clause, 
or $z^j_{i} \leftrightarrow z^j_{i-1} \vee (\neg) x_k$, 
if the $i$-th literal appears in the $j$-th clause and happens to be $(\neg)x_k$ 
(here $z^j_0$ is the constant $0$),
\item for $j =1, \ldots, m$, a constant-size set of clauses logically equivalent to with
$w_j \leftrightarrow w_{j-1} \wedge z^j_{2n}$ 
(here $w_0$ is the constant $1$),
\item $w_m$.
\end{itemize}
The derivation of this from $A$ is the obvious one: first, for each $j = 1,\ldots, m$,
introduce each $z^j_{i} \leftrightarrow z^j_{i-1}$ resp.~$z^j_{i} \leftrightarrow z^j_{i-1} \vee (\neg) x_k$ 
as extension axioms and perform a series of resolutions with the $j$-th clause of $A$, eventually obtaining $z^j_{2n}$.
Then, again for each $j = 1,\ldots, m$, introduce $w_j \leftarrow w_{j-1} \wedge z^j_{2n}$ as an extension axiom
and resolve $w_j \leftarrow w_{j-1} \wedge z^j_{2n}$ with $z^j_{2n}$ and $w_{j-1}$ to obtain $w_j$.
\end{proof}

\subsection{Simulating $\ERpls$ by linear dominance (from Section~\ref{sec:lin-dom>er-pls})}
 \label{appendix_simulate_ER}

We give some details omitted from Section~\ref{sec:lin-dom>er-pls}.
We first show how to simulate the $\ER$ rule of $\ERpls$ in the linear dominance system,
and then we fill a gap in how we handled the dominance rule.

\paragraph{ER rule.} 
Suppose $\Gamma \wedge \Delta$ is derived from $\Gamma$ by the ER rule.
Then there is an ER derivation~$\pi$ which begins with $\Gamma$ and includes every
clause in $\Delta$. 
It is enough to show that from $(\Gamma^*, \emptyset, \top, \emptyset)$
we can derive $(\Gamma^*, \pi^*, \top, \emptyset)$, where we write $\pi^*$
for the set of translations of clauses appearing in $\pi$. This is because we can then 
copy all of $\Delta^*$ from the derived constraints~$\pi^*$ into the core constraints~$\Gamma^*$
using the transfer rule of the dominance system, to get $((\Gamma \wedge \Delta)^*, \pi^*, \top, \emptyset)$.
and finally reset the derived constraints to $\emptyset$ using the deletion rule,
to get $((\Gamma \wedge \Delta)^*, \emptyset, \top, \emptyset)$ as required.

Treating each clause in $\pi$ in turn, either it is an initial clause from $\Gamma$;
or it was derived from earlier clauses by resolution or weakening;
or it is a clause introduced by the extension rule. The first two cases are
easily dealt with by the implicational derivation rule (of the dominance system),
plus the well-known simulation of resolution by CP~\cite{cook1987complexity}.
The remaining case is the extension rule.
We use the standard arguments for showing that we can simulate
 this rule in a system based on adding certain redundant clauses, see e.g.~\cite{kullmann1999generalization}.

An extension axiom in~$\pi$ has the form of three clauses $\neg u \vee \neg v \vee y$,
$\neg y \vee u$ and $\neg y \vee v$,
which we translate into three PB constraints:
\begin{align*}
A: & \quad (1-u) + (1-v) + y \ge 1\\
B: & \quad (1-y) + u \ge 1\\
C: & \quad (1-y) + v \ge 1.
\end{align*}
We show that these can be derived from any set $\EE$ of PB constraints 
which do not mention the variable $y$,
by three applications of the redundance-based strengthening rule of the dominance system 
(we may ignore the part of the rule having to do with the ordering, as we are using the trivial ordering~$\top$).
That is, we must find substitutions $\sigma, \tau, \omega$ and 
derivations 
$\EE \cup \{ \neg A \} \vCP ( \EE \cup \{ A \} )_{\restrict \sigma}$,
$\EE \cup \{ A, \neg B \} \vCP (\EE \cup \{A, B\})_{\restrict \tau}$
and
$\EE \cup \{ A, B, \neg C \} \vCP (\EE \cup \{A, B, C\})_{\restrict \omega}$.

Let $\sigma$ map $y \mapsto u$ and do nothing else.
Then $\EE_{\restrict \sigma} = \EE$ and $A_{\restrict \sigma}$ is $2-v \ge 1$,
which is the Boolean axiom $v \le 1$. Thus we have the first CP derivation.
For the second CP derivation we can set $\tau = \sigma$ again,
since $B_{\restrict \sigma}$ is just $1 \ge 1$.


Now let $\omega$ map $y \mapsto 0$ and do nothing else.
Then $\EE_{\restrict \omega} = \EE$ and we have
\begin{align*}
A_{\restrict \omega} &\textrm{ is } (1-u)+(1-v) \ge 1  
	&&  \! C_{\restrict \omega} \textrm{ is } v \ge 0  \\
B_{\restrict \omega} & \textrm{ is } u \ge 0 
	&& \neg C \textrm{ is } 1-y +v \le 0
\end{align*}
Thus $B_{\restrict \omega}$ and $C_{\restrict \omega}$ are trivially derivable.
For $A_{\restrict\omega}$,
we can rearrange it as $u + v \le 1$, which we can derive 
by starting  with $\neg C$ and adding axioms $y \le 1$ and $u \le 1$.
Thus we have the third CP derivation.


We also allow extension axioms expressing $y \leftrightarrow u$ or
$y \leftrightarrow 0$ or $y \leftrightarrow 1$. We can derive the $^*$ translations
of these in a similar way to the first two derivations above, by simply setting $y$
to be $u$ or the desired value.

\paragraph{Dominance rule.}
We fill in a step that was omitted in Section~\ref{sec:lin-dom>er-pls}.
Suppose we are in a configuration
$(\Gamma^*, \emptyset, \OOl, \vec x)$,
where $\OOl$ is the lexicographic order.
We must derive $(\Gamma^*, \Delta^*, \OOl, \vec x)$,
where $\Delta$ is a set of extension axioms over $\vec x ; \vec y$.

We do this by adding each 
extension axiom in $\Delta$ in turn using the redundance-based strengthening rule, 
in the same way that we handled introducing 
extension axioms in the case of the ER rule above. 
However this time we must check that we satisfy the order condition for this rule
-- this did not matter before, as there we had the trivial order.

For example (using the same notation $\EE$, $A$, $\sigma$ as above), in the derivation
needed to introduce $A$, the formal requirement is to show
\[
\EE \cup \{ \neg A \} \vCP ( \EE \cup \{ A \} )_{\restrict \sigma}
 \cup \OOl(\vec x_{\restrict \sigma}, \vec x).
\]
However this is trivial as 
$\sigma$ does not affect the variables $\vec x$, which are explicitly
the only variables compared in the ordering $\OOl$. The same goes for the other substitutions used
above.

\end{document}